\pdfoutput=1
\RequirePackage{ifpdf}
\ifpdf 
\documentclass[pdftex]{sigma}
\else
\documentclass{sigma}
\fi

\newcommand{\ZZ}{{\bf Z}}
\newcommand{\RR}{{\mathbb R}}
\newcommand{\CC}{{\mathbb C}}
\newcommand{\FF}{{\bf F}}
\newcommand{\HH}{{\mathbb H}}
\newcommand{\OO}{{\mathbb O}}
\newcommand{\PP}{{\bf P}}
\newcommand{\bS}{{\mathbb S}}
\newcommand{\cA}{{\cal A}}
\newcommand{\cS}{{\cal S}}
\newcommand{\tI}{{\tilde{I}}}
\newcommand{\tJ}{{\tilde{J}}}
\newcommand{\tS}{{\tilde{S}}}
\newcommand{\im}{\operatorname{im}}

\numberwithin{equation}{section}

\newtheorem{Theorem}{Theorem}[section]

\newtheorem{Lemma}[Theorem]{Lemma}
\newtheorem{Proposition}[Theorem]{Proposition}
{ \theoremstyle{definition}

\newtheorem{Example}[Theorem]{Example}
\newtheorem{Examples}[Theorem]{Examples}
\newtheorem{Remark}[Theorem]{Remark} }

\begin{document}
\allowdisplaybreaks

\newcommand{\arXivNumber}{1903.01228}

\renewcommand{\PaperNumber}{064}

\FirstPageHeading

\ShortArticleName{Lagrangian Grassmannians and Spinor Varieties in Characteristic Two}

\ArticleName{Lagrangian Grassmannians and Spinor Varieties\\ in Characteristic Two}

\Author{Bert VAN GEEMEN~$^\dag$ and Alessio MARRANI~$^{\ddag\S}$}

\AuthorNameForHeading{B.~van Geemen and A.~Marrani}

\Address{$^\dag$~Dipartimento di Matematica, Universit\`a di Milano, Via Saldini 50, I-20133 Milano, Italy}
\EmailD{\href{mailto:lambertus.vangeemen@unimi.it}{lambertus.vangeemen@unimi.it}}

\Address{$^\ddag$~Museo Storico della Fisica e Centro Studi e Ricerche Enrico Fermi,\\
\hphantom{$^\ddag$}~Via Panisperna 89A, I-00184, Roma, Italy}
\Address{$^\S$~Dipartimento di Fisica e Astronomia Galileo Galilei, Universit\`a di Padova,\\
\hphantom{$^\S$}~and INFN, sezione di Padova, Via Marzolo 8, I-35131 Padova, Italy}
\EmailD{\href{alessio.marrani@pd.infn.it}{alessio.marrani@pd.infn.it}}

\ArticleDates{Received March 08, 2019, in final form August 21, 2019; Published online August 27, 2019}

\Abstract{The vector space of symmetric matrices of size $n$ has a natural map to a projective space of dimension $2^n-1$ given by the principal minors. This map extends to the Lagrangian Grassmannian ${\rm LG}(n,2n)$ and over the complex numbers the image is defined, as a set, by quartic equations. In case the characteristic of the field is two, it was observed that, for $n=3,4$, the image is defined by quadrics. In this paper we show that this is the case for any $n$ and that moreover the image is the spinor variety associated to ${\rm Spin}(2n+1)$. Since some of the motivating examples are of interest in supergravity and in the black-hole/qubit correspondence, we conclude with a brief examination of other cases related to integral Freudenthal triple systems over integral cubic Jordan algebras.}

\Keywords{Lagrangian Grassmannian; spinor variety; characteristic two; Freudenthal triple system}

\Classification{14M17; 20G15; 51E25}

\section{Introduction}
In the paper \cite{HSL} the maximal commutative subgroups of the $n$-qubit Pauli group were studied. Such subgroups correspond to points in a Lagrangian Grassmannian ${\rm LG}(n,2n)$ over the Galois field $\FF_2$ with two elements. A subset of this Grassmannian is parametrized by symmetric $n\times n$ matrices. The principal minor map
\begin{gather*}
\underline{\pi}\colon \ {\cal S}_n := \{\text{symmetric}\ n\times n \ \text{matrices}\} \longrightarrow \PP\FF_2^{2^n} ,
\end{gather*}
which associates to a symmetric matrix with coefficients in $\FF_2$ its principal minors, extends to a map, again denoted by $\underline{\pi}$, on all of ${\rm LG}(n,2n)$:
\begin{gather*}
\underline{\pi}\colon \ {\rm LG}(n,2n) \longrightarrow Z_n\quad\big({\subset} \, \PP\FF_2^{2^n}\big),
\end{gather*}
where the image $Z_n$ of $\underline{\pi}$ is called the variety of principal minors of symmetric matrices.

Over the field of complex numbers, the variety $Z_n$ was studied in~\cite{HS}. In~\cite{Oeding} quartic equations which define~$Z_n$, as a set, were obtained. In case $n=3$, $Z_n$ is defined by a unique quartic polynomial which is Cayley's hyperdeterminant.

Returning to the case of the field $\FF_2$, it was observed that the hyperdeterminant reduces to the square of a quadratic polynomial over this field and~$Z_3$ is the quadric in $\PP\FF_2^3$ defined by this quadratic polynomial. Moreover, in~\cite{HSL} it was shown that for $n=4$ the variety $Z_n$ is defined by ten quadrics in~$\PP\FF_2^{16}$.

We will show that over any field of characteristic two, $Z_n$ is defined by quadrics for any $n\geq 3$. Moreover, these quadrics define the (image of the) well-known spinor variety $\bS_{n+1}$ associated to the group ${\rm Spin}(2n+1)$:
\begin{gather*}
Z_n \cong \bS_{n+1}\quad \big({\subset} \, \PP^{2^n-1}\big).
\end{gather*}

Over the complex numbers there is a natural embedding
\begin{gather*}
\sigma\colon \ \bS_{n+1} \longrightarrow \PP \CC^{2^n},
\end{gather*}
where $\CC^{2^n}$ is the spin representation of ${\rm Spin}(2n+1)$. Considering now a field of characteristic two, one obtains similarly an embedding
\begin{gather*}
\underline{\sigma}\colon \ \bS_{n+1} \longrightarrow \PP^{2^n-1}.
\end{gather*}
It is well-known that the image of $\underline{\sigma}$ (and of $\sigma$) is defined by quadrics. The spinor variety $\bS_{n+1}$ parametrizes maximally isotropic subspaces of a~smooth quadric. A subset of these subspaces is parametrized by alternating $(n+1)\times(n+1)$ matrices (since the characteristic is two, that means ${}^tA=-A$ (which is~$A$!) and all diagonal coefficients of $A$ should be zero). The maps $\sigma$ and $\underline{\sigma}$ are given by the $2^n$ Pfaffians of the principal submatrices of~$A$. The restriction of~$\underline{\sigma}$ to these subspaces will again be denoted by the same symbol:
\begin{gather*}
\underline{\sigma}\colon \ \cA_{n+1} := \{\text{alternating} \ (n+1)\times (n+1) \ \text{matrices} \} \longrightarrow \PP^{2^n-1}.
\end{gather*}

To show that $Z_n=\underline{\sigma}(\bS_{n+1})$, we will define in Section~\ref{mapsa}
an explicit map
\begin{gather*}
\alpha\colon \ \cS_n \longrightarrow \cA_{n+1},\qquad\text{such that}\qquad \underline{\pi}(S_n) = \underline{\sigma}(\alpha(S_n))
\end{gather*}
for all symmetric matrices $S_n\in \cS_n$ with coefficients in a(ny) algebraically closed field of characteristic~$2$. The proof involves an `induction on $n$' argument and the verification of a quadratic relation between the determinant of a symmetric matrix and certain of its principal minors, see Proposition~\ref{prop}.

The map $\alpha$ extends to a map
\begin{gather*}
\alpha\colon \ {\rm LG}(n,2n) \longrightarrow \bS_{n+1}.
\end{gather*}
To complete the picture, we discuss in Section~\ref{beta} a classical map, over fields of characteristic two,
\begin{gather*}
\beta\colon \ \bS_{n+1} \longrightarrow {\rm LG}(n,2n),\qquad \beta\alpha = F_{{\rm LG}(n,2n)},\qquad \alpha\beta = F_{\bS_{n+1}},
\end{gather*}
where $F$ is the Frobenius map, which is induced by the map $(\ldots:x_i:\ldots)\mapsto \big(\ldots:x_i^2:\ldots\big)$ on the projective spaces. This points to the `exceptional' isogeny of linear algebraic groups between ${\rm Spin}(2g+1)$ and ${\rm Sp}(2g)$ in characteristic two as the `reason' for these results.

In fact, after having completed a first draft of this paper, we became aware of the paper~\cite{Gow}, where R.~Gow uses this isogeny to provide the ingredients for a more intrinsic proof of the fact that $Z_n=\underline{\sigma}(\bS_{n+1})$ in characteristic two, see Remark~\ref{remarkisog}. We also noticed the recent paper \cite{LH} which involves the geometry studied in this paper.

Since the Cayley hyperdeterminant, ${\rm LG}(3,6)$ and $\bS_{6}$ also appear in the context of Freudenthal triple systems and four-dimensional Maxwell--Einstein supergravity on four space-time dimensions (as well as in \cite[Table~3]{Holweck}), we add a brief discussion on some characteristic two aspects of that topic.

\section{The maps}

\subsection{The fields}
Even if our motivation comes from algebra and geometry over the Galois field $\FF_2$ with two elements, we will consider the case of an algebraically closed field~$K$ of characteristic two. In such a field $2=0$ (and $-1=+1$), in particular the finite `binary' field $\FF_2=\ZZ/2\ZZ$ is contained in $K$, but $K$ will have infinitely many elements and one can do algebraic geometry over such a~field as well.

\subsection{Principal minors of symmetric matrices} \label{pm}
We recall the basics of the principal minors of a symmetric matrix. Let
\begin{gather*}
S_n := \left(\begin{matrix} x_{11}&x_{12}&x_{13}&\ldots&x_{1n}\\
x_{12}&x_{22}&x_{23}&\ldots&x_{2n}\\
x_{13}&x_{23}&x_{33}&\ldots&x_{3n}\\
\vdots&\vdots&\vdots&\vdots&\vdots\\
x_{1n}&x_{2n}&x_{3n}&\ldots&x_{nn}
\end{matrix}\right)~
\end{gather*}
be a symmetric $n\times n$ matrix. For a subset $I:=\{i_1,\ldots,i_k\}$ of $\{1,\ldots,n\}$ with $1\leq i_1<\dots<i_k\allowbreak \leq n$, the principal minor defined by $I$ is the determinant of the submatrix of $S_n$ with coefficients~$(S_n)_{ij}$ with $i,j\in I$. This principal minor will be denoted by $S_{n,I}$ and if $I$ is the empty set we put $S_{n,\varnothing}=1$. For example,
\begin{gather*}
S_{n,\varnothing}=1,\qquad
S_{n,\{i\}}=x_{ii},\qquad
S_{n,\{i,j\}}=x_{ii}x_{jj} - x_{ij}^2,\\
S_{n,\{i,j,k\}}=x_{ii}x_{jj}x_{kk}-x_{ii}x_{jk}^2 - x_{jj}x_{ik}^2 - x_{kk}x_{ij}^2 + 2x_{ij}x_{ik}x_{jk}.
\end{gather*}
The principal minor map $\pi\colon \cS_n\rightarrow \PP^{2^{n}-1}$ is defined by the
\begin{gather*}
2^{n} = \binom{n}{0} + \binom{n}{1} + \dots +\binom{n}{m} + \dots + \binom{n}{n}
\end{gather*}
principal minors of the $m\times m$ principal submatrices with $0\leq m\leq n$.

\begin{Example}[the case $n=3$] \label{n3}
Let $z_{abc}$, with $a,b,c\in\{0,1\}$ be the coordinates on $\PP^7$. The principal minor map
\begin{gather*}
\pi\colon \ \cS_3 \longrightarrow \PP^7,\qquad S_3 \longmapsto (z_{000}:z_{001}:\ldots:z_{111})= (S_{3,\varnothing}:S_{3,\{1\}}:\ldots:S_{3,\{1,2,3\}}),
\end{gather*}
in general, $z_{abc}=S_{3,I}$ where $1\in I$ iff $c=1$, $2\in I$ iff $b=1$ and $3\in I$ iff $a=1$. So $z_{100}=S_{3,\{3\}}=x_{33}$ and $z_{011}=S_{3,\{1,2\}}=x_{11}x_{22}-x_{12}^2$. The equation of the (Zariski closure of the) image of $\pi$ is $H=0$ where~$H$ is the hyperdeterminant~\cite{HS,Oeding})
\begin{gather*}
H :=z_{000}^2z_{111}^2+z_{001}^2z_{110}^2+z_{010}^2z_{101}^2+z_{100}^2z_{011}^2\\
\hphantom{H :=}{} -2(z_{000}z_{001}z_{110}z_{111} + z_{010}z_{011}z_{100}z_{101} + z_{000}z_{010}z_{101}z_{111} \\
\hphantom{H :=}{} +z_{001}z_{011}z_{100}z_{110} + z_{000}z_{011}z_{100}z_{111} + z_{001}z_{010}z_{101}z_{110})\\
\hphantom{H :=}{} +4(z_{000}z_{011}z_{101}z_{110}+z_{001}z_{010}z_{100}z_{111}).
\end{gather*}
In case we work over a field of characteristic two, $H$ is the square of a degree two polynomial
\begin{align*}
H & \equiv z_{000}^2z_{111}^2+z_{001}^2z_{110}^2+z_{010}^2z_{101}^2+z_{100}^2z_{011}^2\\
& \equiv (z_{000}z_{111}+z_{001}z_{110}+z_{010}z_{101}+z_{100}z_{011})^2 \quad\text{mod}\, 2,
\end{align*}
since now $(a+b)^2=a^2+2ab+b^2=a^2+b^2$. The (closure of the) image $Z_3$ of the map $\underline{\pi}\colon \cS_3\rightarrow\PP\FF_2^3$ is defined by this degree two polynomial, since
\begin{gather*}
1\cdot\big(x_{11}x_{22}x_{33}+x_{11}x_{23}^2 +x_{22}x_{13}^2 + x_{33}x_{12}^2\big)+ x_{11}\big(x_{22}x_{33}+x_{23}^2\big)\\
 \qquad{} + x_{22}\big(x_{11}x_{33}+x_{13}^2\big)+ x_{33}\big(x_{11}x_{22}+x_{12}^2\big)=0.
\end{gather*}
\end{Example}

\subsection{Pfaffians of alternating matrices} \label{Pfaff}
Let $A_N=(y_{ij})$ be the alternating $N\times N$ matrix where the coefficients $y_{ij}$ are the variables in the polynomial ring $R:=\ZZ[\ldots,y_{ij},\ldots]_{1\leq i<j\leq N}$
(so if $j>i$ then $y_{ji}=-y_{ij}$ and the diagonal coefficients of $A$ are zero):
\begin{gather*}
A = A_N :=\left(\begin{matrix}
0&y_{12}&y_{13}&y_{14}&\ldots&y_{1N}\\
-y_{12}&0&y_{23}&y_{24}&\ldots&y_{2N}\\
-y_{13}&-y_{23}&0&y_{34}&\ldots&y_{3N}\\
-y_{14}&-y_{24}&-y_{34}&0&\ldots&y_{4N}\\
\vdots&\vdots&\vdots&\vdots&\vdots&\vdots\\
-y_{1N}&-y_{2N}&-y_{3N}&-y_{4N}&\ldots&0
\end{matrix}\right).
\end{gather*}
Then $A$ corresponds to a $2$-form
\begin{gather*}
\sigma_A := \sum_{1\leq i<j\leq N} y_{ij}e_i\wedge e_j,
\end{gather*}
where the $e_i$ are the standard basis of $R^N$. In case $N$ is even, one defines a homogeneous polynomial $\operatorname{Pf}(A) \in \ZZ[\ldots,y_{ij},\ldots]_{1\leq i<j\leq N}$ of degree $N/2$ by considering the $N/2$-th exterior power of~$\sigma_A$:
\begin{gather*}
\sigma_A^{\wedge N/2} := \underbrace{\sigma_A\wedge\sigma_A\wedge \cdots\wedge\sigma_A}_{N/2} = {(N/2)!}\operatorname{Pf}(A)e_1\wedge\cdots\wedge e_N,
\end{gather*}
In case $N$ is odd, we simply put $\operatorname{Pf}(A)=0$.

For any field $K$ there is a natural homomorphism of rings $\ZZ\rightarrow K$ defined by sending $1\in\ZZ$ to $1\in K$ and this extends to a homomorphism of rings $\ZZ[\ldots,y_{ij},\ldots]\rightarrow K[\ldots,y_{ij},\ldots]$. The image $\operatorname{Pf}_K$ of the polynomial $\operatorname{Pf}(A)$ under this homomorphism defines the Pfaffian of an $N\times N$ alternating matrix coefficients in~$K$ as follows. Let $B=(b_{ij})$ be such an alternating matrix, then $\operatorname{Pf}(B):=\operatorname{Pf}_K(\ldots,b_{ij},\ldots)$, so we
evaluate $\operatorname{Pf}_K$ in $y_{ij}:=b_{ij}$.

It is not hard to verify the following formula for the Pfaffian of an alternating $N\times N$ mat\-rix~$A$ with coefficients~$y_{ij}$:
\begin{gather*}
\operatorname{Pf}(A) := \sum_{j=2}^N (-1)^jy_{1j}\operatorname{Pf}\big(A_{\hat{1}\hat{j}}\big),
\end{gather*}
where $A_{\hat{1}\hat{j}}$ is the $(N-2)\times(N-2)$ submatrix of $A$ where the first and $j$-th row and column of~$A$ are deleted. In case $\operatorname{char}(K)=2$ and $n$ is a fixed integer with $1\leq n\leq N$ one similarly has the following formula (we omit a sign since $\operatorname{char}(K)=2$ and notice that $y_{jj}=0$):
\begin{gather*}
\operatorname{Pf}(A) := \sum_{j=1}^N y_{jn}\operatorname{Pf}\big(A_{\hat{j}\hat{n}}\big),\qquad \operatorname{char}(K)=2.
\end{gather*}

For any subset $\tI\subset \{1,\ldots,N\}$ with an even number of elements we consider the `principal' submatrix of~$A$ with coefficients $(A_N)_{ab}$ and $a,b\in \tI$. These matrices are again alternating and thus we can consider their Pfaffians, which we denote by $A_{N,\tI}$ and we put $A_{N,\varnothing}:=1$. For example,
\begin{gather*}
A_{N,\varnothing}=1,\qquad A_{N,\{i,j\}}=y_{ij},\qquad A_{N,\{i,j,k,l\}}=y_{ij}y_{kl} - y_{ik}y_{jl} + y_{il}y_{jl}.
\end{gather*}
The Pfaffian map $\sigma\colon \cA_N\rightarrow \PP^{2^{N-1}-1}$ is defined by the
\begin{gather*}
2^{N-1} = \binom{N}{0} + \binom{N}{2} + \binom{N}{4} + \cdots
\end{gather*}
Pfaffians of the $m\times m$ principal submatrices, with $m$ even and $0\leq m\leq N$.

Since $e_i\wedge e_j$ and $e_k\wedge e_l$ commute in the exterior algebra $\wedge^*k^N$, one easily verifies that, with $A=A_N$,
\begin{gather*}
\exp(\sigma_A) := 1+\sigma_A+ \frac{1}{2!} \sigma_A\wedge \sigma_A+ \cdots + \frac{1}{(N/2)!}\sigma_A^{\wedge N/2} = \sum_\tI \operatorname{Pf}(A_\tI)e_\tI,
\end{gather*}
where the sum is over the ordered subsets $\tI=\{i_1,\ldots,i_{2k}\}\subset\{1,\ldots,N\}$ with an even number of elements and $e_\tI=e_{i_1}\wedge\cdots\wedge e_{i_{2k}}$,
since the $k!$ in the definition of $\operatorname{Pf}(A_\tI)$ cancels with the~$\frac{1}{k!}$ in the exponential function. Using commutativity as well as $(e_i\wedge e_j)^{\wedge 2} =(e_i\wedge e_j)\wedge(e_i\wedge e_j)=0$, we also have
\begin{gather*}
\exp(\sigma_A) = \exp\bigg(\sum_{i<j} y_{ij}e_i\wedge e_j\bigg) = \prod_{i<j}\exp(y_{ij}e_i\wedge e_j) =\prod_{i<j}(1+y_{ij}e_i\wedge e_j),
\end{gather*}
and thus
\begin{gather*}
\prod_{i<j}(1+y_{ij}e_i\wedge e_j) = \sum_\tI \operatorname{Pf}(A_\tI)e_\tI,
\end{gather*}
a formula which works over any field, also of finite characteristic. The Pfaffian map now appears as a natural map from $\cA_N$ into $\PP \wedge^{{\rm even}}K^N$.

\begin{Example}[the case $N=4$]\label{n4}
We define the Pfaffian map
\begin{gather*}
\sigma\colon \ \cA_4 \longrightarrow \PP^7,\qquad A_4 \longmapsto (z_{000}:z_{001}:\ldots:z_{111})= (A_{4,\varnothing}:A_{4,\{1,4\}}:\ldots:A_{4,\{1,2,3,4\}}),
\end{gather*}
by $z_{abc}=A_{4,\tilde{I}}$ and $\tI$ is obtained from $I$ with $z_{abc}=S_{3,I}$ in Example~\ref{n3} by $\tI=I$ if $\sharp I$, the cardinality of~$I$, is even and else $\tI=I\cup\{4\}$. One easily verifies that
\begin{gather*}
A_{4,\varnothing}A_{4,\{1,2,3,4\}} - A_{4,\{1,2\}}A_{4,\{3,4\}}+ A_{4,\{1,3\}}A_{4,\{2,4\}}-A_{4,\{1,4\}}A_{4,\{2,3\}} = 0,
\end{gather*}
hence the (closure of the image) of $\sigma$ is the quadric defined by $z_{000}z_{111}-z_{001}z_{110}+z_{010}z_{101}-z_{100}z_{011}$.

In particular, the image of $\underline{\sigma}\colon \cA_4\rightarrow\PP^{7}$ is defined by the degree two polynomial $z_{000}z_{111}+z_{001}z_{110}+z_{010}z_{101}+z_{100}z_{011}$ and thus, comparing with Example~\ref{n3}, the polynomials defining the images of $\underline{\sigma}$ (for $N=4$) and $\underline{\pi}$ (for $n=3$) are the same (and this holds over any field of characteristic two).
\end{Example}

\subsection{A map from symmetric to antisymmetric matrices}\label{mapsa}
In Example~\ref{n4} we observed that, over a field with characteristic two, the maps $\underline{\pi}$ and $\underline{\sigma}$, with domains $\cS_3$ and $\cA_4$ respectively, have images that are defined by the same quadratic polynomial. Now we define a map $\alpha\colon \cS_n\rightarrow\cA_{n+1}$ which will be shown
to have the property:
$\underline{\pi}(S_n)=\underline{\sigma}(\alpha(S_n))$ for any $n$.

With the notation from Sections \ref{pm} and \ref{Pfaff}, we define a (non-linear) map
\begin{gather*}
\alpha\colon \ \cS_n\longrightarrow \cA_{n+1},\qquad
S_n \longmapsto \tS_n:=\alpha(S_n),\\
\hphantom{\alpha\colon}{} \ \big(\tS_n\big)_{ij} =\begin{cases} 0 & \text{if }i=j,\\
 x_{ii}x_{jj}+x_{ij}^2 & \text{if } i \neq j, i,j\neq n+1,\\
 x_{ii} &\text{if }j=n+1,\\
 x_{jj} &\text{if }i=n+1.
 \end{cases}
\end{gather*}
Notice that we assume the field to have characteristic two, so $\tS_n$ is alternating (in fact, $\big(\tS_n\big)_{ii}=0$ for all $i$ and $\big(\tS_n\big)_{ij}=-\big(\tS_n\big)_{ji}=\big(\tS_n\big)_{ji}$).
For example,
\begin{gather*}
\alpha\colon \ S_3 =
\left(\begin{matrix} x_{11}&x_{12}&x_{13}\\
x_{12}&x_{22}&x_{33}\\
x_{13}&x_{23}&x_{33}
\end{matrix}\right)
\longmapsto \tS_3 =
\left(\begin{matrix} 0&x_{11}x_{22}+x_{12}^2&x_{11}x_{33}+x_{13}^2&x_{11}\\
x_{11}x_{22}+x_{12}^2&0& x_{22}x_{33}+x_{23}^2& x_{22}\\
x_{11}x_{33}+x_{13}^2& x_{22}x_{33}+x_{23}^2 &0&x_{33}\\
x_{11}&x_{22}&x_{33}&0
\end{matrix}\right).
\end{gather*}

Finally we define how the coordinate functions of $\underline{\pi}$ and $\underline{\sigma}$ correspond: for any subset $I\subset\{1,\ldots,n\}$ we define a subset $\tI\subset\{1,\ldots,n+1\}$ with an even number of elements as follows
\begin{gather*}
\tI = \begin{cases}
 I&\text{if} \ \sharp I \ \text{is even},\\
 I\cup\{n+1\}&\text{if} \ \sharp I \ \text{is odd}.
 \end{cases}
\end{gather*}
We will prove the following theorem in Section~\ref{secthm}:

\begin{Theorem} \label{thm} Let $\underline{\pi}\colon \cS_n\rightarrow\PP^{2^n-1}$ be the principal minor map with coordinate functions $S_{n,I}$ as in Section~{\rm \ref{pm}} and let $\underline{\sigma}\colon \cA_{n+1}\rightarrow\PP^{2^n-1}$ be the Pfaffian map with coordinate functions $A_{n+1,\tI}$ as in Section~{\rm \ref{Pfaff}} and where $I$ and $\tI$ correspond as above. Let $\alpha\colon \cS_n\rightarrow\cA_{n+1}$ be defined as in Section~{\rm \ref{mapsa}}.

Then we have, over any field of characteristic two
\begin{gather*}
\underline{\pi} = \underline{\sigma}\circ\alpha.
\end{gather*}
In fact, $S_{n,I}=\tS_{n,\tI}$ for all $S_n\in \cS_n$ and all subsets $I$ of $\{1,\ldots,n\}$.
\end{Theorem}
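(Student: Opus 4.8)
The plan is to prove the identity $S_{n,I}=\tS_{n,\tI}$ for every symmetric matrix $S_n$ and every subset $I\subset\{1,\ldots,n\}$ by induction on $|I|$ (equivalently on $n$, after reducing to the subset $I$). First I would observe that since $\tS_{n,\tI}$ only involves the rows and columns of $\alpha(S_n)$ indexed by $\tI$, and these coincide with $\alpha(S_n')$ for the principal submatrix $S_n'$ of $S_n$ on $I$ (with the last row/column $x_{ii},\,i\in I$), one may assume $I=\{1,\ldots,n\}$: so the statement to prove is $\det(S_n)=\operatorname{Pf}(\tS_n)$ when $n$ is even, and $\det(S_n)=\operatorname{Pf}(\widetilde{S_n\cup\{n+1\}})=\operatorname{Pf}((\alpha(S_n))_{\{1,\ldots,n+1\}})$ when $n$ is odd — but actually both cases are uniformly $S_{n,I}=\tS_{n,\tI}$, so the clean formulation is simply the top-minor identity for each size.

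The engine of the induction is the cofactor/Laplace-type expansions recalled in the excerpt. On the Pfaffian side, in characteristic two one has $\operatorname{Pf}(A)=\sum_{j} y_{jm}\operatorname{Pf}(A_{\hat j\hat m})$ for any fixed index $m$; I would choose $m=n+1$ when $\sharp I$ is odd (so $n+1\in\tI$), which makes the entries $y_{j,n+1}=x_{jj}$ appear, and then $A_{\hat j,\widehat{n+1}}$ is again of the form $\alpha(\text{something})$ on a subset of odd size one less — feeding the induction. When $\sharp I$ is even, $n+1\notin\tI$, and I would instead expand along some row $m\in\tI$: here the entries are $x_{mm}x_{jj}+x_{mj}^2$, and the minors $A_{\hat j\hat m}$ have $\sharp$ reduced by $2$, with $n+1$ still absent — again of the right shape. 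On the determinant side one uses the analogous Laplace expansion of $\det(S_n)$, but $\det$ does \emph{not} have such a clean single-row expansion relating it to the $\tS$-structure directly, which is exactly where Proposition~\ref{prop} (the quadratic relation between $\det$ of a symmetric matrix and certain of its principal minors, promised in the introduction) must enter: it is the bridge that lets the determinant recursion match the Pfaffian recursion term by term.

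Concretely, I expect the argument to run: fix $I$, reduce to $I=\{1,\ldots,k\}$ as above; split on the parity of $k$; in each case write down the Pfaffian expansion of $\tS_{n,\tI}=\operatorname{Pf}((\alpha(S_n))_{\tI})$ along the appropriate index, identify each sub-Pfaffian as $\tS_{m,\tJ}$ for a smaller subset $J$ via the inductive hypothesis, and then recognize the resulting sum as exactly a Laplace-type expansion of $S_{n,I}=\det((S_n)_I)$ — with Proposition~\ref{prop} absorbing the extra squared off-diagonal terms $x_{ij}^2$ that the naive determinant expansion does not produce. The base cases $|I|=0$ ($1=1$) and $|I|=1$ ($x_{ii}=x_{ii}$, since $\widetilde{\{i\}}=\{i,n+1\}$ and $A_{\{i,n+1\}}=y_{i,n+1}=x_{ii}$) are immediate, and $|I|=2,3$ are checked by hand and serve as a sanity check against Examples~\ref{n3} and~\ref{n4}.

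The main obstacle I anticipate is precisely the interface at the determinant side: an ordinary row expansion of $\det(S_n)$ produces off-diagonal cross terms that do not visibly organize into the $\tS$-entries $x_{ii}x_{jj}+x_{ij}^2$, so one must massage the sum using symmetry of $S_n$ and the characteristic-two relation $(a+b)^2=a^2+b^2$, and this is where the content of Proposition~\ref{prop} is genuinely needed rather than cosmetic. In other words, the combinatorics of matching the two expansions is routine once the key quadratic identity is in hand; the real work is isolating and proving that identity (which the authors have evidently factored out into Proposition~\ref{prop}), and then being careful that the bookkeeping of which minors appear — and the harmless disappearance of signs in characteristic two — is done correctly.
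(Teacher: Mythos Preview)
Your proposal is correct and follows essentially the same route as the paper: induction on the size of $I$, reduction to the full index set, a parity split, Pfaffian row-expansion, identification of the sub-Pfaffians via the inductive hypothesis, and then invocation of Proposition~\ref{prop} to match the determinant side.

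One small divergence worth flagging: for the odd case you propose expanding the Pfaffian along the extra index $m=n+1$, which yields entries $x_{jj}$ and sub-Pfaffians indexed by the even sets $I\setminus\{j\}$; the determinant identity you would then need is $\det(S_n)=\sum_{j=1}^n x_{jj}\det(S_{n,\hat\jmath})$ for $n$ odd in characteristic two. This is true (each monomial of $\det(S_n)$ contains an odd number of diagonal factors, hence appears an odd number of times on the right), but it is \emph{not} Proposition~\ref{prop}(2). The paper instead expands along the index $n$ in both parities, producing the coefficients $x_{kk}x_{nn}+x_{kn}^2$ uniformly, plus the extra term $x_{nn}\det(S_{n-1})$ in the odd case; this is exactly what the two clauses of Proposition~\ref{prop} deliver. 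Your variant is arguably cleaner in the odd case, while the paper's choice keeps the two parities parallel and lets a single proposition do all the work. Also, a minor slip: after deleting $j$ and $n+1$ the remaining index set has \emph{even} size $n-1$, not ``odd size one less'', and the resulting matrix is not literally $\alpha(\text{something})$ but rather the even-type principal submatrix $\tS_{n,\tJ}$ with $\tJ=J=I\setminus\{j\}$; this is harmless since the inductive hypothesis applies directly.
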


\begin{Examples}
We give some examples of the identity $S_{n,I}=\tS_{n,\tI}$. Obviously $S_{n,\varnothing}=1=\tS_{n,\varnothing}$. In case $I=\{i\}$ one has $\tI=\{i,n+1\}$ and indeed $S_{n,\{i\}}=x_{ii}=\tS_{n,\{1,n+1\}}$. In case $I=\{i,j\}$ one has $\tI=I$ and we do have the identity
\begin{gather*}
S_{n,\{i,j\}} = \det \left(\begin{matrix} x_{ii}&x_{ij}\\x_{ij}&x_{jj}\end{matrix}\right)
 = x_{ii}x_{jj}+x_{ij}^2 =
\operatorname{Pf} \left(\begin{matrix} 0&x_{ii}x_{jj}+x_{ij}^2\\x_{ii}x_{jj}+x_{ij}^2&0\end{matrix}\right) = \tS_{n,\{i,j\}}.
\end{gather*}

Finally if $I=\{i,j,k\}$ then $\tI=\{i,j,k,n+1\}$ and we do have $S_{n,\{i,j,k\}}=\operatorname{Pf}\big(\tS_{n,\{i,j,k,n+1\}}\big)$ because of the identity
\begin{gather*}
\det \left(\begin{matrix} x_{ii}&x_{ij}&x_{ik}\\
x_{ij}&x_{jj}&x_{kk}\\
x_{ik}&x_{jk}&x_{kk}
\end{matrix}\right) = \operatorname{Pf} \left(\begin{matrix}
0&x_{ii}x_{jj}+x_{ij}^2&x_{ii}x_{kk}+x_{ik}^2&x_{ii}\\
x_{ii}x_{jj}+x_{ij}^2&0& x_{jj}x_{kk}+x_{jk}^2& x_{jj}\\
x_{ii}x_{kk}+x_{ik}^2& x_{jj}x_{kk}+x_{jk}^2 &0&x_{kk}\\
x_{ii}&x_{jj}&x_{kk}&0
\end{matrix}\right),
\end{gather*}
which holds since
\begin{gather*}
x_{ii}x_{jj}x_{kk}+x_{ii}x_{jk}^2 + x_{jj}x_{ik}^2 + x_{kk}x_{ij}^2\\
\qquad{} = \big(x_{ii}x_{jj}+x_{ij}^2\big)x_{kk}+\big(x_{ii}x_{kk}+x_{ik}^2\big)x_{jj}+\big(x_{jj}x_{kk}+x_{jk}^2\big)x_{ii}.
\end{gather*}
Notice that these examples show that for $n=3$ we have $S_{3,I} = \tS_{3,\tI}$ for all subsets $I$ of $\{1,2,3\}$. Thus we verified Theorem~\ref{thm} for $n=3$ and this will be the starting point for an induction argument.
\end{Examples}

\section{The proof of Theorem \ref{thm}} \label{secthm}

\subsection{The determinant of a symmetric matrix}\label{termsn}
In order to prove Theorem~\ref{thm}, we start with some observations on the determinant of a symmetric matrix, in particular in the case the field has characteristic two.

The determinant of an $n\times n$ matrix $A=(a_{ij})$ is
\begin{gather*}
\det(A) = \sum_{\sigma\in\Sigma_n}\operatorname{sgn}(\sigma) a_{1\sigma(1)}\cdots a_{n\sigma(n)},
\end{gather*}
where $\Sigma_n$ is the symmetric group on $\{1,\ldots,n\}$. As $\det(A)=\det({}^tA)$, under the substitution $a_{ij}:=a_{ji}$ the monomials of the determinant are either fixed or permuted in pairs. A fixed term may contain any $a_{ii}$'s and if $a_{ij}$ occurs, so does $a_{ji}$. In a field of characteristic two, one has $+1=-1$ and $x+x=0$, so in a~determinant of a symmetric matrix over such a field the paired monomials will cancel and only the fixed monomials appear, all with coefficient~$1$. If $a_{ij}$, with $i\ne j$, occurs in a fixed term, then since $a_{ij}=a_{ji}$, the term contains~$a_{ij}^2$. Up to a simultaneous permutation of the rows and columns (to preserve the symmetry) any term in the determinant of the symmetric matrix $S_n$ is thus of the form
\begin{gather*}
x_{11}\cdots x_{kk}x_{k+1,k+2}^2\cdots x_{n-1,n}^2,\qquad k=0,1,\ldots, n .
\end{gather*}

\begin{Proposition}\label{prop} Let $K$ be a field of characteristic two and let $S_n=(x_{ij})$ be a symmetric $n\times n$ matrix. Then we have the following relation between
principal minors of~$S_n$:
\begin{enumerate}\itemsep=0pt
 \item[$(1)$] in case $n$ is even, 
\begin{gather*}
\det(S_n) = \big(x_{11}x_{nn}+x_{1n}^2\big)\det(S_{n,\hat{1},\hat{n}})+\dots+\big(x_{n-1,n-1}x_{nn}+x_{n-1,n}^2\big)\det\big(S_{n,\widehat{n-1},\hat{n}}\big),
\end{gather*}
\item[$(2)$] in case $n$ is odd, 
\begin{gather*}
\det(S_n) = \big(x_{11}x_{nn}+x_{1n}^2\big)\det(S_{n,\hat{1},\hat{n}})+\dots +\big(x_{n-1,n-1}x_{nn}+x_{n-1,n}^2\big)\det\big(S_{n,\widehat{n-1},\hat{n}}\big)\\
\hphantom{\det(S_n) =}{} + x_{nn}\det(S_{n-1}),
\end{gather*}
\end{enumerate}
where $\det(S_{n,\hat{i},\hat{j}})$ is the principal minor $S_{n,I}$ with $I$ the subset of $\{1,\ldots,n\}$ with only $i$, $j$ omitted and $S_{n-1}=S_{n,\hat{n}}$ is the submatrix of $S_n$ where the last row and column are omitted.
\end{Proposition}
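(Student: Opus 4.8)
The plan is to prove Proposition~\ref{prop} by a direct analysis of the monomials occurring in $\det(S_n)$, using the structure of ``fixed terms'' described in Section~\ref{termsn}. As noted there, every monomial in $\det(S_n)$ over a field of characteristic two is, up to simultaneous row/column permutation, of the form $x_{11}\cdots x_{kk}x_{k+1,k+2}^2\cdots x_{n-1,n}^2$; in particular, each variable index from $1$ to $n$ is ``used'' exactly once, either as the index of a diagonal variable $x_{aa}$ or as one of the two indices of an off-diagonal square $x_{ab}^2$. The idea is to classify each monomial according to how the index $n$ appears, and to show the classification matches the right-hand side term by term.

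\textbf{First} I would fix a monomial $M$ in $\det(S_n)$ and distinguish cases. Either $n$ appears in $M$ via a factor $x_{an}^2$ for some $a<n$, or it appears via the diagonal factor $x_{nn}$. In the first case, I claim the index $a$ is then ``used up'' together with $n$, and the remaining factors of $M$ form exactly a monomial in $\det(S_{n,\hat a,\hat n})$, the principal minor on the index set $\{1,\dots,n\}\setminus\{a,n\}$. Conversely, every monomial of $(x_{aa}x_{nn}+x_{an}^2)\det(S_{n,\hat a,\hat n})$ of the form $x_{an}^2\cdot(\text{monomial of }S_{n,\hat a,\hat n})$ arises this way --- note that the $x_{aa}x_{nn}$ part of the product contributes the $x_{nn}$-divisible monomials, which I handle in the other case. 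In the second case, $x_{nn}\mid M$, so $M/x_{nn}$ is a monomial in the determinant of the matrix obtained by deleting row and column $n$, i.e.\ in $\det(S_{n-1})$. When $n$ is odd, $S_{n-1}$ has even size and this term is nonzero; its monomials pair up with exactly the $x_{aa}x_{nn}$-divisible contributions of the sum over $a$, and I must check these do \emph{not} cancel. When $n$ is even, $S_{n-1}$ has odd size, $\det(S_{n-1})$ still makes sense, but one should see that every monomial divisible by $x_{nn}$ is also divisible by some $x_{aa}$ with $a<n$ (since an odd-size symmetric matrix always has at least one diagonal variable in each surviving monomial), and these are precisely the $x_{aa}x_{nn}$-terms of the sum; so no separate $x_{nn}\det(S_{n-1})$ term is needed.

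\textbf{The main obstacle} I expect is the bookkeeping of signs and multiplicities, specifically verifying that each monomial of $\det(S_n)$ is hit \emph{exactly once} by the right-hand side (no double counting, no cancellation). Since everything lives in characteristic two, a monomial appearing twice would vanish, so the delicate point is that the decomposition ``$n$ is squared with a unique partner $a$'' versus ``$n$ is diagonal'' is genuinely a partition of the monomials, and that within the $x_{nn}$-divisible monomials the split between the $x_{aa}x_{nn}$-summands and (for odd $n$) the extra $x_{nn}\det(S_{n-1})$ term does not produce a collision. A clean way to organize this is to work with the formula $\det(S_n)=\sum_a x_{an}\det(S_n)_{\widehat{?}}$ expanding along the last row and then pair the $\sigma$ with $\sigma^{-1}$ as in Section~\ref{termsn}, tracking which entry of row $n$ and column $n$ is used; the surviving ``fixed'' contributions are exactly $x_{an}^2$ times a cofactor (when $\sigma(n)=a\neq n$ is part of a transposition) or $x_{nn}$ times $\det(S_{n-1})$ (when $\sigma(n)=n$). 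Identifying that cofactor with the \emph{principal} minor $\det(S_{n,\hat a,\hat n})$ --- rather than a non-principal one --- uses symmetry of $S_n$ together with the ``fixed term'' constraint that forces index $a$ to be paired only with $n$. Once this identification is in place the two cases of the proposition follow by separating monomials according to whether $n$ sits on the diagonal or not.
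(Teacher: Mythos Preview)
Your handling of the monomials containing $x_{an}^2$ (for some $a<n$) is correct and matches the paper's argument: such a monomial determines $a$ uniquely, and it appears exactly once on the right-hand side, namely in the summand $x_{an}^2\det(S_{n,\hat a,\hat n})$.

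The gap is in your treatment of the $x_{nn}$-divisible monomials. You write that you must check each monomial of $\det(S_n)$ is ``hit \emph{exactly once} by the right-hand side (no double counting, no cancellation)'' and that the split ``does not produce a collision''. This is the wrong picture: an $x_{nn}$-divisible monomial is typically hit \emph{many} times on the right-hand side, and the identity holds because that count is \emph{odd}, not because it equals one. Concretely, take
\[
M = x_{11}\cdots x_{kk}\,x_{k+1,k+2}^2\cdots x_{n-2,n-1}^2\cdot x_{nn}
\]
(a $t'_k$ in the paper's notation, necessarily with $k\equiv n-1\pmod 2$). Then $M$ is divisible by $x_{aa}x_{nn}$ for every $a\in\{1,\dots,k\}$, and for each such $a$ the quotient $M/(x_{aa}x_{nn})$ is a monomial of $\det(S_{n,\hat a,\hat n})$. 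So $M$ appears $k$ times among the $x_{aa}x_{nn}$-summands. For $n$ even, $k$ is odd and we are done; for $n$ odd, $k$ is even and the additional appearance of $M$ in $x_{nn}\det(S_{n-1})$ brings the total to $k+1$, odd. This parity count is exactly what the paper carries out.

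Your alternative route via involutions (pairing $\sigma$ with $\sigma^{-1}$ and tracking what happens to the index $n$) is cleaner than you suggest: it immediately yields
\[
\det(S_n) = x_{nn}\det(S_{n-1}) + \sum_{a=1}^{n-1} x_{an}^2\,\det(S_{n,\hat a,\hat n}),
\]
valid for all $n$. But this is \emph{not} the proposition: the proposition has $x_{aa}x_{nn}+x_{an}^2$ in place of $x_{an}^2$, and for $n$ even it omits $x_{nn}\det(S_{n-1})$. Passing from one to the other requires
\[
\sum_{a=1}^{m} x_{aa}\det(S_{m,\hat a}) \;=\; \begin{cases} \det(S_m), & m \text{ odd},\\ 0, & m \text{ even},\end{cases}
\]
applied with $m=n-1$. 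This is again the same parity count: each involution on $\{1,\dots,m\}$ contributes its monomial once for every fixed point, and the number of fixed points is $\equiv m\pmod 2$. So whichever of your two routes you follow, the missing ingredient is this odd-multiplicity argument, and your plan as written does not supply it.
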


\begin{proof} The right hand sides of the two formulas in Proposition~\ref{prop} are invariant under simultaneous permutations of rows and columns which fix the last row and column. Therefore the formulas follow if the following monomials have equal coefficients on both sides of the identity
\begin{gather*}
t_k:=\big(x_{11}\cdots x_{kk}x_{k+1,k+2}^2\cdots\big)\cdot x_{n-1,n}^2,\qquad
t'_k:=\big(x_{11}\cdots x_{kk}x_{k+1,k+2}^2\cdots x_{n-2,n-1}^2\big)\cdot x_{nn}.
\end{gather*}
Notice that the $t_k$ appearing on the left hand side are those for which $n$ and $k$ have the same parity. Similarly, the $t'_k$ on the left are those for which $n$ and $k$ have different parity.

On the right hand side, each term in $(x_{ii}x_{nn}+x_{in}^2)\det(S_{n,\widehat{i},\hat{n}})$ and also in $x_{nn}\det(S_{n-1})$ is a~$t_k$ or a~$t_k'$ up to simultaneous permutation of rows and columns. So we only need to verify that each term of type~$t_k$ occurs an odd number of times in the summands on the right hand sides of Proposition~\ref{prop}.

The terms $t_k$ all have the variable $x_{n-1,n}$. In the matrices $S_{n,\hat{i},\hat{n}}$ ($i=1,\ldots,n-1$) and $S_{n-1}$ appearing in the two formulas in Proposition~\ref{prop} we omit the $n$-th row and column, so they don't have the variable $x_{n-1,n}$. Only $x_{n-1,n-1}x_{nn}+x_{n-1 n}^2$ has this variable. Each $t_k$, $k=0,\ldots, n-2$, thus occurs at most once in the expansion of the right hand side. It is also not hard to see that each $t_k$ actually occurs in $x_{n-1 n}^2\cdot \det(S_{n,\widehat{n-1},\hat{n}})$, provided $k$ has the same parity as $n$.

Now consider the terms $t'_k$. We notice first of all that $t'_{n-1}=x_{11}\cdots x_{nn}$ occurs in all terms on the right hand side of each of the two formulas in Proposition~\ref{prop} and since the two right hand sides each have an odd number of terms, it survives.

Next we consider $t'_{n-3}=x_{11}\cdots x_{n-3,n-3} x_{n-2,n-1}^2 x_{n,n}$. Considering $x_{n-2,n-1}^2$, it obviously does not occur in the two terms
\begin{gather*}
\big(x_{n-2,n-2}x_{nn}+x_{n-2,n}^2\big)\det\big(S_{n,\widehat{n-2},\hat{n}}\big),\qquad
\big(x_{n-1,n-1}x_{nn}+x_{n-1,n}^2\big)\det\big(S_{n,\widehat{n-1},\hat{n}}\big).
\end{gather*}
However, $t'_{n-3}$ does appear in all other summands of each of the two right hand sides in Proposition~\ref{prop}. Thus $t'_{n-3}$ appears in an odd number of summand and hence it appears on the right hand side. More generally, $t'_{n-2k}$ does not appear in the $2k$ summands $\big(x_{n-i,n-i}x_{nn}+ x_{n-i,n}^2\big)\det\big(S_{n,\widehat{n-i},\hat{n}}\big)$ for $i=1,\ldots,2k$, but it appears in all other summands. Hence $t'_{n-2k}$ appears in an odd number of summands and hence it appears on the right hand side. This concludes the proof of Proposition~\ref{prop}.
\end{proof}

\begin{proof}[Proof of Theorem \ref{thm}] 
We need to show that $S_{n,I} = \tS_{n,\tI}$ for any $n$ and any $I \subset\{1,\ldots,n\}$. We proceed by induction on $n$, and we already verified the equalities for all $I$ in the case $n=3$. So we assume that $S_{n,I} = \tS_{n,\tI}$ holds for all $I\subset \{1,\ldots,n\}$ and we must prove that $S_{n+1,J} = \tS_{n+1,\tJ}$ for all subsets $J\subset\{1,\ldots,n+1\}$.

In case $\sharp J<n+1$, after a permutation of the indices, we may assume that $J=\{1,2,\ldots,k\}\subset\{1,\ldots,n\}$, and then $S_{n+1,J} = \tS_{n+1,\tJ}$ follows from the induction hypothesis. To deal with the remaining case $J=\{1,\ldots,n+1\}$ we distinguish the cases $n+1$ odd and $n+1$ even.

In case $n+1$ is odd, $\tilde{J}=\{1,\ldots,n+1,n+2\}$ and we must show that $S_{n+1,{J}}=\tS_{n+1,\tilde{J}}$, that is $\det(S_{n+1})=\operatorname{Pf}\big(\tS_{n+1}\big)$. It is more convenient to change the integer $n$ to $n-1$ and then we must show $\det(S_{n})=\operatorname{Pf}\big(\tS_{n}\big)$ for~$n$ odd. Using the formula for computing the Pfaffian given in Section~\ref{Pfaff} (with $N=n+1$) we have
\begin{gather*}
\operatorname{Pf}\big(\tS_{n}\big) = \sum_{k=1}^{n+1}\big(\tS_n\big)_{k,n}\operatorname{Pf}\big(\tS_{n,\hat{k},\hat{n}}\big)
 = \left(\sum_{k=1}^{n-1} (x_{kk}x_{nn}+x_{kn}^2)\operatorname{Pf}\big(\tS_{n,\hat{k},\hat{n}}\big)\right)
 + x_{nn}\operatorname{Pf}\big(\tS_{n,\hat{n},\widehat{n+1}}\big).
\end{gather*}
The principal submatrix $\tS_{n,\hat{n}}$ of $\tS_n$ obtained by deleting the $n$-th row and column, is an alterna\-ting $n\times n$ matrix where the coefficients $x_{in}$ no longer appear and which is exactly~$\tS_{n-1}$, so \smash{$\tS_{n,\hat{n}}=\tS_{n-1}$}. For all $k\in \{1,\ldots,n-1\}$ the Pfaffian of the $(n-1)\times (n-1)$ alter\-na\-ting matrix $\tS_{n-1,\hat{k}}$ obtained by deleting the $k$-th row and column of $\tS_{n-1}$ is $\tS_{n-1,\tI}$ where $\tI=\big\{1,\ldots,\hat{k},\ldots,n\big\}$. By induction we know that this Pfaffian is $\det(S_{n-1,I})$ where $I=\big\{1,\ldots,\hat{k},\ldots,n-1\big\}$ in case $k<n$, which is also $\det\big(S_{n-1,\hat{k}}\big)$. In case $k=n$, we have $\big(\tS_n\big)_{n,n}=0$ and we already omitted this term. Finally if $k=n+1$ we have $\tS_{n,\hat{n},\widehat{n+1}}=\tS_{n-1,\tI}$ where $\tI=\{1,\ldots,n-1\}$ and thus, by induction, $\operatorname{Pf}\big(\tS_{n,\hat{n},\widehat{n+1}}\big)=\det(S_{n-1})$. Thus we can rewrite the Pfaffian of $\tS_n$ in terms of principal minors of $S_{n-1}$:
\begin{gather*}
\operatorname{Pf}\big(\tS_{n}\big) =\left(\sum_{k=1}^{n-1} \big(x_{kk}x_{nn}+x_{kn}^2\big)\det\big(S_{n-1,\hat{k}}\big)\right) + x_{nn}\det(S_{n-1}),
\end{gather*}
and the equality $\det(S_n)=\operatorname{Pf}\big(\tS_{n}\big)$ for $n$ odd follows from Proposition~\ref{prop}(2).

In case $n+1$ is even, $J=\{1,\ldots,n+1\}=\tJ$ and we must show that $S_{n+1,{J}}=\tS_{n+1,{J}}$, that is $\det(S_{n+1})=\operatorname{Pf}\big(\tS_{n+1,\widehat{n+2}}\big)$. Again we prefer to change the integer~$n$ to $n-1$, so we must show that for $n$ even we have $\det(S_{n})=\operatorname{Pf}\big(\tS_{n,\widehat{n+1}}\big)$. We have the following expansion of the Pfaffian of the alternating $n\times n$ matrix $\tS_{n,\widehat{n+1}}$:
\begin{gather*}
\operatorname{Pf}\big(\tS_{n,\widehat{n+1}}\big) =
\sum_{k=1}^{n-1}\big(\tS_{n,\widehat{n+1}}\big)_{k,n}\operatorname{Pf}\big(\tS_{n,\hat{k},\hat{n},\widehat{n+1}}\big) =
\sum_{k=1}^{n-1} \big(x_{kk}x_{nn}+x_{kn}^2\big)\operatorname{Pf}\big(\tS_{n,\hat{k},\hat{n},\widehat{n+1}}\big).
\end{gather*}
Notice that $\tS_{n,\hat{k},\hat{n},\widehat{n+1}}=\tS_{n-1,\hat{k},\hat{n}}$ and by induction we may assume that
\begin{gather*} \operatorname{Pf}\big(\tS_{n-1,\hat{k},\hat{n}}\big)= \det\big(S_{n-1,\hat{k}}\big),
\end{gather*} since if $n$ is even, then $I:=\big\{1,\ldots,\hat{k},\ldots,n-1\big\}=\tI$. Finally we notice that $S_{n-1,\hat{k}}=S_{n,\hat{k},\hat{n}}$. Thus the equality $\det(S_{n})=\operatorname{Pf}\big(\tS_{n,\widehat{n+1}}\big)$ for $n$ even follows from Proposition~\ref{prop}(1).
\end{proof}

\section{From matrices to Grassmannians}

\subsection{Global aspects}
We recall that the spaces of symmetric and antisymmetric matrices have a natural interpretation as open subsets of certain Grassmannians, like the spinor varieties, and that the principal minor map $\pi$ and the Pfaffian map $\sigma$ extend to these Grassmannians. We also discuss the actions of some groups on these Grassmannians. In the final section we recall that the image of the spinor variety is defined by quadrics.

\subsection{The Lagrangian Grassmannian} \label{Leg}
Let $V$ be a vector space over a field $K$ and let
\begin{gather*}
e\colon \ V\times V \longrightarrow K,
\end{gather*}
be a symplectic form, that is, an alternating, non-degenerate, bilinear form (so for any $x\in V$, $e(x,x)=0$ and if $x\ne 0$, there is a $y\in V$ with $e(x,y)\ne 0$). Then $V$ has a~symplectic basis $f_1,\ldots,f_{2n}$, that is, $e(f_i,f_{j+n})=-e(f_{j+n},f_i)=\delta_{ij}$ (Kronecker's delta) for $1\leq i,j\leq n$ and all other $e(f_i,f_j)$ are zero. So if ${\bf I}$ denotes the $n\times n$ identity matrix, then
\begin{gather*}
e\left(\sum_{i=1}^{2n} x_if_i,\sum_{j=1}^{2n}y_jf_j\right) = \sum_{i=1}^n x_iy_{i+n} - x_{i+n}y_i = (x_1\ldots x_{2n})
\begin{pmatrix}
 0&{\bf I}\\-{\bf I}&0
 \end{pmatrix}\begin{pmatrix}
 y_1\\ \vdots\\y_{2n}
 \end{pmatrix}.
\end{gather*}
A (linear) subspace $W\subset V$ is called isotropic if $e(w,w')=0$ for all $w,w'\in W$ and $W$ is called Lagrangian if it is isotropic and $\dim W=n$, the maximal possible. Choosing a basis $w_1,\ldots, w_n$ of $W$, let $M_W$ be the $2n\times n$ matrix whose columns are the $w_i$. Then $W=\operatorname{im}\big(M_W\colon K^n\rightarrow K^{2n}\big)$ and $W$ is isotropic iff
\begin{gather*}
{}^tM_W\begin{pmatrix}
 0&{\bf I}\\-{\bf I}&0
 \end{pmatrix} M_W = 0\quad\Longleftrightarrow\quad
 {}^tAB - {}^tBA = 0,\qquad M_W = \begin{pmatrix} A\\B\end{pmatrix}.
\end{gather*}
In particular, the subspace $W_0:=\langle f_1,\ldots,f_n\rangle $ is Lagrangian and $M_{W_0}$ has blocks $A={\bf I}$ and $B=0$. More generally, given a symmetric $n\times n$ matrix $X$, the subspace $W_X$ spanned by the columns of the matrix $M$ with blocks $A=I$ and $B=X$ is Lagrangian
\begin{gather*}
\cS_n \hookrightarrow {\rm LG}(n,2n),\qquad X \longmapsto W_X := \operatorname{im}\begin{pmatrix} {\bf I}\\X\end{pmatrix}.
\end{gather*}
The Lagrangian subspaces of $K^{2n}$ are parametrized by the Lagrangian Grassmannian ${\rm LG}(n,2n)$, an algebraic subvariety of dimension $n(n+1)/2$ of the Grassmannian ${\rm Gr}(n,2n)$ of all $n$-di\-men\-sio\-nal subspaces of $K^{2n}$.

\subsection{The Pl\"ucker map}\label{pluc}
The Pl\"ucker map gives an embedding of
\begin{gather*}
{\rm Gr}(n,2n) \longrightarrow \PP \wedge^n K^{2n}, \qquad W \longmapsto \wedge^nW = \sum_I p_I(W)f_I,
\end{gather*}
where $I=\{i_1,\ldots,i_n\}$ is an ordered subset of $\{1,\ldots,2n\}$ and $f_I:=f_{i_1}\wedge\dots\wedge f_{i_n}$ where the~$f_i$ are the standard basis of $K^{2n}$. If $W$ is the span of the columns of an $2n\times n$ matrix $M_W$, then~$p_I(W)$ is the determinant of the $n\times n$ submatrix of~$M_W$ given by the rows $i_1,\ldots,i_n$ of~$M_W$.

To understand the restriction of the Pl\"ucker map to the submanifold ${\rm LG}(n,2n)$ of ${\rm Gr}(n,2n)$, we recall some general results on the exterior algebra of a symplectic vector space over a~field~$K$ of characteristic zero (see~\cite{Gow} and the references given there, or \cite[Section~11.6.7]{P}, but note the misprints). Let $e$ be the standard symplectic form on $V:=K^{2n}$, then one defines contraction maps
\begin{gather*}
\partial\colon \ \wedge^kV \longrightarrow \wedge^{k-2} V,\\
\hphantom{\partial\colon}{} \ \partial(v_1\wedge\cdots\wedge v_k) := \sum_{i<j}e(v_i,v_j)(-1)^{i+j-1}
v_1\wedge\cdots\wedge \widehat{v_i}\wedge\cdots\wedge\widehat{v_j}\wedge\cdots\wedge v_k.
\end{gather*}
Let the $f_i$ be a symplectic basis of $V$ as before, then we define
\begin{gather*}
\epsilon\colon \ \wedge^k V \longrightarrow \wedge^{k+2}V,\qquad \theta\longmapsto \Gamma\wedge\theta
\qquad\text{with}\qquad \Gamma := \sum_{i=1}^nf_i\wedge f_{i+n}\quad\big({\in}\, \wedge^2 V\big).
\end{gather*}
We extend $\partial$ and $\epsilon$ to the exterior algebra $\wedge^*V$ of $V$ by linearity. Finally we define a linear map
\begin{gather*}
H\colon \ \wedge^*V := \bigoplus_{k=0}^{2n}\wedge^kV \longrightarrow \wedge^*V,\qquad
H(\theta) = (n-k)\theta\qquad\text{if}\quad\theta\in\wedge^kV.
\end{gather*}
These linear maps define a representation of the Lie algebra ${\mathfrak{sl}}(2)$ on $\wedge^*V$:
\begin{gather*}
H = [\partial,\epsilon],\qquad [H,\partial] = 2\partial,\qquad [H,\epsilon]=-2\epsilon.
\end{gather*}
We denote the subspace of highest weight vectors, of weight $n-k\geq 0$, for this ${\mathfrak{sl}}(2)$-representation by
\begin{gather*}
\big({\wedge}^kV\big)_0 := \big\{ \theta\in\wedge^kV\colon \partial\theta = 0 \big\},\qquad k = 0,1,\ldots,n.
\end{gather*}
As a consequence, there is a decomposition (\cite[Section~11.6.7, Theorem~3]{P}, basically the Lefschetz decomposition from \cite[p.~122]{GH}),
\begin{gather*}
\wedge^kV = \bigoplus_{2i\geq k-n} \Gamma^i\wedge\big({\wedge}^{k-2i}V\big)_0,
\end{gather*}
which is the decomposition of $\wedge^kV$ into irreducible ${\rm Sp}(2n)$ subrepresentations. In the case $k=n$, the vector space $\wedge^nV$ is the weight space for ${\mathfrak{sl}}(2)$ with weight $0$, and thus
\begin{gather*}
\wedge^nV = \big({\wedge}^nV\big)_0\oplus V'_n, \qquad V'_n=\im\big(\epsilon\colon \wedge^{n-2}V\hookrightarrow\wedge^nV\big) = \im\big(\partial\colon \wedge^{n+2}V\hookrightarrow\wedge^{n}V\big),
\end{gather*}
and $\big({\wedge}^nV\big)_0$ is a trivial ${\mathfrak{sl}}(2)$-representation, moreover, $\epsilon^2\colon \wedge^{n-2}V\rightarrow\wedge^{n+2}V$, $\partial^2\colon \wedge^{n+2}V\rightarrow\wedge^{n-2}V$ are isomorphisms.

Let $W$ be a Lagrangian subspace of $V$. Then one can choose a symplectic basis $f_i$ for $V$ such that $f_1,\ldots,f_n$ are a basis of $W$ and one easily finds that now
$\Gamma\wedge\big({\wedge}^nW\big)=0\in \wedge^{n+2}V$. Since the decomposition of $\wedge^nV$ does not depend on the choice of a symplectic basis we find that
\begin{gather*}
{\rm LG}(n,2n) = {\rm Gr}(n,2n) \cap \PP\big({\wedge}^n V\big)_0\qquad\big({\subset} \,\PP \wedge^n V\big),
\end{gather*}
where we view ${\rm Gr}(n,2n)$ as a subvariety of $\PP\big({\wedge}^n V\big)$.

For example, if $n=3$ then ${\rm LG}(3,6)$ maps to $\PP^{13}$ since the dimension of $\big({\wedge}^3V\big)_0$ is then $20-6=14$, this case is discussed in~\cite{IR} and Section~\ref{varcas}.

\subsection{The principal minor map}\label{pmm}
The principal minor map extends to a map, again denoted by $\pi$,
\begin{gather*}
\pi\colon \ {\rm LG}(n,2n) \longrightarrow \PP^{2^n-1},\qquad W \longmapsto (\ldots:p_J(W):\ldots),
\end{gather*}
where $J$ runs over the $2^n$ special subsets $J\subset\{1,\ldots,2n\}$ with $\sharp J=n$, where, for every $i\in \{1,\ldots,n\}$, $J$ contains either $i$ or $n+i$. In case $W$ is the image of $M_W$ and $M_W$ has blocks ${\bf I}$ and $X\in\cS_n$, then these $p_J(W)$ are easily seen to be the principal minors of~$X$. Thus $\pi$ is a projection
of ${\rm LG}(n,2n)\subset \PP\big({\wedge}^nK^{2n}\big)_0$ into $\PP^{2^n-1}$ and it is not hard to verify that~$\pi$ is a regular map (base point free) on ${\rm LG}(n,2n)$. The closure $Z_n$ of $\pi(\cS_n)$ is thus the projective variety $\pi({\rm LG}(n,2n))$.

We now show that the morphism $\pi\colon {\rm LG}(n,2n)\rightarrow Z_n$ has degree $2^{n-1}$, if the characteristic of the field~$K$ is not two. (In the lemma below, ${\rm LG}(n,2n)/G_n$ is not isomorphic to $Z_n$ for $n>3$ since there are invariant monomials in the $x_{ij}$ on $\cS_n\subset {\rm LG}(n,2n)$ which are not contained in the ring of principal minors.)

\begin{Lemma}The principal minor map $\pi\colon {\rm LG}(n,2n)\rightarrow Z_n \big({\subset}\,\PP^{2^n-1}\big)$ has degree $2^{n-1}$ over a~field of characteristic different from two. This map factors over a~quotient of ${\rm LG}(n,2n)$ by a~group $G_n\cong (\ZZ/2\ZZ)^{n-1}$.
\end{Lemma}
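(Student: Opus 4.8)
The plan is to identify the group $G_n$ explicitly and then count. The relevant symmetries are the diagonal sign changes: for a sign vector $\varepsilon=(\varepsilon_1,\ldots,\varepsilon_n)\in\{\pm1\}^n$, conjugating a symmetric matrix $X=(x_{ij})$ by the diagonal matrix $D_\varepsilon=\mathrm{diag}(\varepsilon_1,\ldots,\varepsilon_n)$ sends $x_{ij}\mapsto\varepsilon_i\varepsilon_j x_{ij}$. Every principal minor $S_{n,I}$ is a polynomial in the $x_{ii}$ and in the $x_{ij}^2$ together with the ``triple'' terms $x_{ij}x_{jk}x_{ik}$ (as one sees from the expansion of a symmetric determinant, cf.\ Section~\ref{termsn}), and each such monomial is invariant under $X\mapsto D_\varepsilon X D_\varepsilon$; hence $\pi$ is invariant under this action. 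On the other hand $D_\varepsilon$ and $D_{-\varepsilon}=-D_\varepsilon$ act identically on $\cS_n$, so the effective group acting on $\cS_n$ is $\{\pm1\}^n/\{\pm(1,\ldots,1)\}\cong(\ZZ/2\ZZ)^{n-1}$. First I would check that this action extends to ${\rm LG}(n,2n)$: the matrix $D_\varepsilon\in\mathrm{GL}(n)$ acts on $K^{2n}$ as $\mathrm{diag}(D_\varepsilon,D_\varepsilon)\in{\rm Sp}(2n)$ (it preserves the standard symplectic form since $D_\varepsilon$ is orthogonal), hence acts on ${\rm LG}(n,2n)$, and on the open cell $\cS_n$ it is precisely $X\mapsto D_\varepsilon X D_\varepsilon$; invariance of $\pi$ on all of ${\rm LG}(n,2n)$ then follows because $\pi$ is regular and $\cS_n$ is dense. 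Set $G_n$ to be the image of $\{\pm1\}^n$ in $\mathrm{PGL}$ or rather its image in $\mathrm{Aut}({\rm LG}(n,2n))$, which is $(\ZZ/2\ZZ)^{n-1}$.

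\textbf{The degree count.} Having factored $\pi$ through ${\rm LG}(n,2n)/G_n$, it remains to show that the induced map ${\rm LG}(n,2n)/G_n\to Z_n$ is birational, equivalently that $G_n$ acts generically freely on ${\rm LG}(n,2n)$ and that $\pi$ separates generic $G_n$-orbits. For generic freeness: a nontrivial $\varepsilon$ (up to global sign) has some $\varepsilon_i\ne\varepsilon_j$; then the condition $D_\varepsilon X D_\varepsilon=X$ forces $x_{ij}=0$, which is a proper closed condition, so the stabilizer of a generic point is trivial. For orbit separation: given a generic symmetric matrix $X$ with all $x_{ij}\ne0$, I want to recover $X$ up to the $G_n$-action from its principal minors. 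The diagonal entries $x_{ii}=S_{n,\{i\}}$ are determined outright; the values $x_{ij}^2=x_{ii}x_{jj}-S_{n,\{i,j\}}$ are determined, so the $x_{ij}$ are determined up to sign; and the products $x_{ij}x_{jk}x_{ik}$ are recovered from $S_{n,\{i,j,k\}}$ and lower minors (using the $n=3$ formula from Example~\ref{n3}), which pins down the sign pattern of the off-diagonal entries up to exactly the freedom of simultaneously flipping the signs of all $x_{ij}$ with $i$ or $j$ in some fixed subset — that is, up to $G_n$. Therefore the generic fiber of $\pi$ is a single $G_n$-orbit, which has $|G_n|=2^{n-1}$ points, so $\deg\pi=2^{n-1}$.

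\textbf{Main obstacle.} I expect the orbit-separation argument to be the delicate part: one must argue carefully that the knowledge of all diagonal entries, all squares $x_{ij}^2$, and all triple products $x_{ij}x_{ik}x_{jk}$ (which is exactly the information in the principal minors, by the structure of symmetric determinants recalled in Section~\ref{termsn}) determines $X$ up to the sign-change action and nothing finer. The cleanest way to see this is to fix representatives by choosing, say, $x_{1j}>0$ (or more precisely, working over an algebraically closed field, normalizing the sign of $x_{1j}$ for each $j$, using the global sign freedom and the coordinatewise freedom to reduce to a unique representative once a generic point is fixed); then the triple products $x_{1j}x_{1k}x_{jk}$ determine the signs of all remaining $x_{jk}$ uniquely, so the normalized representative is unique and the fiber is a torsor under $G_n$. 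Over a field of characteristic two this entire discussion collapses — squares are additive and the group $G_n$ acts trivially since $\varepsilon_i=\pm1=1$ — which is consistent with the hypothesis $\mathrm{char}(K)\ne2$ and indeed explains the parenthetical remark preceding the lemma about why ${\rm LG}(n,2n)/G_n\ne Z_n$ for $n>3$: the quotient ${\rm LG}(n,2n)/G_n$ carries the extra invariant monomials (e.g.\ products of several triple terms) that are not expressible through principal minors, so $\pi$ is only birational onto $Z_n$ after one further (non-normal, for $n>3$) identification, and the lemma claims exactly the degree, not an isomorphism.
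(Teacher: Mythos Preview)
Your proposal is correct and follows essentially the same approach as the paper: identify the group $G_n$ as diagonal sign changes acting on $\cS_n$ by $X\mapsto D_\varepsilon X D_\varepsilon$ (with $-I$ acting trivially), verify invariance of principal minors, and then recover a generic $X$ from its minors by reading off the $x_{ii}$, the $x_{ij}^2$ from $2\times 2$ minors, and fixing the remaining sign ambiguity via the $3\times 3$ minors --- exactly as the paper does. One small inaccuracy: your description of the monomials in a symmetric determinant as built only from diagonals, squares, and triple products is not literally correct for $|I|\geq 4$ (longer cycle terms survive in characteristic $\neq 2$), but this does not affect the argument since every monomial $\prod_i x_{i,\sigma(i)}$ picks up the factor $\prod_i\varepsilon_i\varepsilon_{\sigma(i)}=\big(\prod_i\varepsilon_i\big)^2=1$ anyway.
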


\begin{proof} Any diagonal matrix $D=\operatorname{diag}(t_1,\ldots,t_n,t_1^{-1},\ldots,t_n^{-1})$ with $t_i\ne 0$ fixes the symplectic form $e$ and thus maps ${\rm LG}(n,2n)$ into itself by $W\mapsto DW$, equivalently, $M_W\mapsto DM_W$. Let $D_1:=\operatorname{diag}(t_1,\ldots,t_n)$, and notice that $DM_W$ and $DM_WD_1^{-1}$ map $K^n$ to the same subspace~$DW$ in~$K^{2n}$. For $M_W$ with blocks ${\bf I},X$, the matrix $DM_WD_1^{-1}$ has blocks ${\bf I}$, $D_1^{-1}XD_1^{-1}$, so we see that $D$ maps the image of $\cS_n$ in ${\rm LG}(n,2n)$ into itself and acts as $D\colon X\mapsto D_1^{-1}XD_1^{-1}$. In case all $t_i\in\{1,-1\}$, we have $D_1^{-1}=D_1$ and we write more suggestively $D\colon X\mapsto D_1XD_1^{-1}$, the conjugation by $D_1$. Any principal submatrix of~$X$ is then also conjugated by a submatrix of~$D_1$, and hence the principal minors of $X$ and those of $D_1XD_1^{-1}$ are the same. So the fiber of~$\pi$ over~$\pi(X)$ contains all the $D_1XD_1^{-1}$ where $D_1$ has coefficients~$\pm 1$. Obviously $D_1=-I$ acts trivially and thus we have an action of the group $G_n:=(\ZZ/2\ZZ)^{n-1}$ on ${\rm LG}(n,2n)$ and $\pi$ factors over~${\rm LG}(n,2n)/G_n$. The $ij$-coefficient of $D_1XD_1^{-1}$ is $x_{ij}t_it_j$. Since the $x_{ii},x_{ii}x_{jj}-x_{ij}^2$ are principal minors of $X$, we can recover the $x_{ij}$ from $\pi(W_X)$, except for the signs of the $x_{ij}$ with $i\ne j$. However, the principal minors $S_{n,\{i,j,k\}}$ (see Section~\ref{pm}) show that once, for a fixed~$i$, all the $x_{il}$ are non-zero and the signs of all these $x_{il}$ are fixed, then the signs of all $x_{jk}$ are fixed. Therefore the fiber over $\pi(X)$, for general $X\in \cS_n$, consists of exactly $2^{n-1}$ elements that are an orbit of $G_n$. This implies that~$\pi$ has degree~$2^{n-1}$ and that~$\pi$ factors over~${\rm LG}(n,2n)/G_n$.
\end{proof}

\subsection{The spinor varieties}\label{isotr}\label{pfmap}
A quadratic form on a vector space $V$ over a field $K$ is a map
\begin{gather*}
q\colon \ V \longrightarrow K,\qquad \text{such that}\qquad q(ax) = a^2q(x),\qquad q(x+y) = q(x) + q(y) +e(x,y),
\end{gather*}
where $a\in K$ and $e$ is a bilinear form and $x,y\in V$. We consider the quadratic form $q$ on $V=K^{2n}$ defined by
\begin{gather*}
q\left(\sum_{i=1}^{2n}x_if_i\right) := \sum_{i=1}^nx_ix_{i+n},\qquad 2q(x) = (x_1\ldots x_{2n})
\begin{pmatrix}
 0&{\bf I}\\{\bf I}&0
 \end{pmatrix}\begin{pmatrix}
 x_1\\ \vdots\\x_{2n}
 \end{pmatrix}.
\end{gather*}
A (linear) subspace $W\subset V$ is called an isotropic subspace of $q$ if $q(w)=0$ for all $w\in W$ and it is a maximally isotropic subspace of $q$ if moreover $\dim W=n$, the maximum possible. Choosing a basis $w_1,\ldots, w_n$ of $W$, let $M_W$ be the $2n\times n$ matrix whose columns are the $w_i$. Then $W=\operatorname{im}\big(M_W\colon K^n\rightarrow K^{2n}\big)$. The subspace $W$ is maximally isotropic for $q$ iff
\begin{gather*}
q(w_i) = 0,\qquad q(w_i+w_j) = 0,\qquad 1\le i,j\leq n,
\end{gather*}
in fact, if $q(w_i)=0$ and also $0=q(w_i+w_j)=e(w_i,w_j)$ for all $i$, $j$, then from
\begin{align*}
q\left(\sum_{i=1}^{n}a_iw_i\right) & = q\left(\sum_{i=1}^{n-1}a_iw_i\right)+a_n^2q(w_{n})+\sum_{i=1}^{n-1}a_ia_{n}e(w_i,w_{n}) \\
& = \sum_{i=1}^na_i^2q(w_i) + \sum_{i<j} a_ia_je(w_i,w_j)
\end{align*}
we see that $W$ is maximally isotropic. In case $\operatorname{char}(K)\ne 2$ this can also be checked using the symmetric matrix of $e$:
\begin{gather*}
{}^tM_W\begin{pmatrix}
 0&I\\I&0
 \end{pmatrix} M_W = 0\quad\Longleftrightarrow\quad
 {}^tAB + {}^tBA = 0,\qquad M_W = \begin{pmatrix} A\\B\end{pmatrix},
\end{gather*}
and notice that $q(w_i)=\big({}^tAB + {}^tBA\big)_{ii}$ and $q(w_i +w_j)=\big({}^tAB + {}^tBA\big)_{ij}$.

The subspace $W_0:=\langle f_1,\ldots,f_n\rangle$ is thus maximally isotropic for $q$. More generally, given an antisymmetric $n\times n$ matrix~$Y$, the subspace $W_Y$ spanned by the columns of the matrix $M$ with blocks $A={\bf I}$ and $B=Y$ is Lagrangian, so
\begin{gather*}
\cA_n \hookrightarrow \bS_n^+,\qquad Y \longmapsto \operatorname{im} \begin{pmatrix} {\bf I}\\Y\end{pmatrix},
\end{gather*}
where $\bS_n^+$ denotes the spinor variety containing $W_0$. This holds over any field, since $q(w_i)=y_{ii}$ and $q(w_i+w_j)=y_{ii}+y_{jj}+y_{ij}+y_{ji}$ and thus $W_Y$ is maximally isotropic for $q$ iff the diagonal coefficients of $Y$ are zero and $y_{ij}+y_{ji}=0$ iff $Y$ is alternating. Recall that there are two $n(n-1)/2$-dimensional families of maximally isotropic subspaces of $q$. They are parametrized by the spinor varieties $\bS_{n}^+$ and $\bS_{n}^-$, which are isomorphic. For spinor varieties see \cite{C}, \cite[Section~11.7]{P} and the references given in \cite[Section~6.0]{RS}.

\subsection{The image of the Pfaffian map}\label{impfaff}
Over the complex numbers, the Pfaffian map on $\cA_n$ from Section~\ref{Pfaff} extends to an embedding of the spinor variety
\begin{gather*}
\sigma\colon \ \bS_n^+ \longrightarrow \PP^{2^{n-1}-1}.
\end{gather*}
In the introduction we used a map $\sigma$ on the spinor variety associated to ${\rm Spin}(2n-1)$, but we will see that these spinor varieties are isomorphic in Section~\ref{evenodd}.

The spinor variety $\bS_{n}^+$ is the homogeneous variety $G/P$, with $G={\rm Spin}(2n)$ and the image of~$\sigma$ consists of the pure spinors (for any one of the two half spin representations of $G$), as in \cite[Section~11.7.2]{P}, $\sigma(\bS_n)$ is also the $G$-orbit of the highest weight vector in the projectivization of the half spin representation. Under certain natural identifications, the Lie algebra of the Spin group is identified with a subspace of the Clifford algebra $C(q)$ of~$q$ and a maximally isotropic subspace~$W$ of~$q$ defines a subalgebra $\wedge^*W\subset C(q)$. In case $e_1,\ldots,e_n$ is a basis of $W$, the element $\exp(y_{ij}e_i\wedge e_j)=\prod(1+y_{ij}e_i\wedge e_j)$ introduced in Section~\ref{Pfaff} is actually an element of the Spin group and from this one can deduce that the orbit of the highest weight vector is indeed locally parametrized by the Pfaffian map.

In general, the orbit under a semisimple simply connected algebraic group~$G$ (defined over an algebraically closed field of arbitrary characteristic) of a highest weight vector in an irreducible minuscule representation of~$G$ is the intersection of quadrics, see~\cite{S}. This implies that the image of $\sigma$ is an intersection of quadrics. The number of quadrics can also be determined, it is
\begin{gather*}
\dim I_2 := \binom{2^{n-1}+1}{2} - \frac{1}{2}\binom{2n}{n},\qquad
I_2 := \big\{Q\in k[\ldots,z_I,\ldots]\colon Q(\sigma(W)) = 0\ \forall\, W\in \bS_n^+ \big\},
\end{gather*}
where $K[\ldots,z_I,\ldots]$ is the homogeneous coordinate ring of $\PP^{2^{n-1}-1}$, in fact, \cite{S} shows that $\dim I_2$ does not depend on the characteristic of the field
and over the complex numbers one can use for example (the proof of) \cite[Theorem~2]{vGspin}). So for $n=4,5,6$ we find $36-35=1$, \mbox{$136-126=10$}, $528-462=66$ quadrics respectively. See also the end of section \cite[Section~11.7.2]{P} for the quadratic relations between Pfaffians, \cite{SV} for explicit methods to find the quadratic equations of $\sigma\big(\bS_{n}^+\big)$ and \cite[Section~6]{RS} for a study of the case $n=5$.

\begin{Proposition}\label{quadr} Let $\underline{\pi}\colon \cS_n \rightarrow \PP^{2^n -1}$ be the principal minor map over an algebraically closed field of characteristic two. Then the closure~$Z_n$ of the image of $\underline{\pi}$ is $\underline{\sigma}\big(\bS_{n+1}^+\big)$ and in particular~$Z_n$ is an intersection of quadrics.
\end{Proposition}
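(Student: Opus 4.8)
The plan is to combine Theorem~\ref{thm} with an elementary density argument and with the structural facts recalled in Section~\ref{impfaff}. By Theorem~\ref{thm} we have the pointwise equality $\underline{\pi}(S_n)=\underline{\sigma}(\alpha(S_n))$ for every $S_n\in\cS_n$, so, taking closures in $\PP^{2^n-1}$,
\[
Z_n=\overline{\underline{\pi}(\cS_n)}=\overline{\underline{\sigma}(\alpha(\cS_n))}.
\]
Since $\alpha(\cS_n)\subseteq\cA_{n+1}\subseteq\bS_{n+1}^+$ (Section~\ref{isotr}) and $\underline{\sigma}\colon\bS_{n+1}^+\to\PP^{2^n-1}$ is a morphism from the irreducible projective spinor variety (Section~\ref{impfaff}), its image $\underline{\sigma}(\bS_{n+1}^+)$ is closed, whence $Z_n\subseteq\underline{\sigma}(\bS_{n+1}^+)$. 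For the reverse inclusion it suffices to show that $\alpha(\cS_n)$ is Zariski dense in $\bS_{n+1}^+$: applying $\underline{\sigma}$ then gives $\underline{\sigma}(\bS_{n+1}^+)=\underline{\sigma}\big(\overline{\alpha(\cS_n)}\big)\subseteq\overline{\underline{\sigma}(\alpha(\cS_n))}=Z_n$.

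The substantive point is therefore this density, which I would obtain from a dimension count resting on the observation that the morphism $\alpha\colon\cS_n\to\cA_{n+1}$ is injective. Indeed, given $\tS_n=\alpha(S_n)$, its last column recovers the diagonal entries $x_{11},\dots,x_{nn}$, and then each off-diagonal entry $x_{ij}$ with $i<j\le n$ is recovered as the \emph{unique} square root of $(\tS_n)_{ij}+x_{ii}x_{jj}$, uniqueness holding because $K$ is algebraically closed of characteristic two, so the Frobenius $t\mapsto t^2$ is a bijection of $K$. Hence $\alpha$ has finite (indeed singleton) fibres, so $\dim\overline{\alpha(\cS_n)}=\dim\cS_n=n(n+1)/2$. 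As $\cS_n$ is irreducible, $\overline{\alpha(\cS_n)}$ is an irreducible closed subvariety of $\bS_{n+1}^+$ of dimension $n(n+1)/2$, which by Section~\ref{isotr} is exactly $\dim\bS_{n+1}^+$; since $\bS_{n+1}^+$ is irreducible this forces $\overline{\alpha(\cS_n)}=\bS_{n+1}^+$. Combined with the previous paragraph, $Z_n=\underline{\sigma}(\bS_{n+1}^+)$.

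It then remains only to invoke Section~\ref{impfaff}: $\underline{\sigma}(\bS_{n+1}^+)$ is the orbit, in the projectivized half-spin representation of ${\rm Spin}(2n+2)$, of a highest weight vector in this minuscule representation, and by \cite{S} such an orbit is an intersection of quadrics; hence so is $Z_n$. I expect the only step needing genuine care to be the density claim, specifically making sure the dimension count really places $\overline{\alpha(\cS_n)}$ in the full spinor variety rather than in a proper subvariety, and that $\cA_{n+1}$ does embed in $\bS_{n+1}^+$ as asserted in Section~\ref{isotr}. An alternative that bypasses the dimension count is to use the Frobenius factorization $\alpha\beta=F_{\bS_{n+1}^+}$ from Section~\ref{beta}: since $F$ is surjective on $K$-points, so is $\alpha\colon{\rm LG}(n,2n)\to\bS_{n+1}^+$, and then $\overline{\alpha(\cS_n)}\supseteq\alpha(\overline{\cS_n})=\alpha({\rm LG}(n,2n))=\bS_{n+1}^+$ because $\cS_n$ is dense in ${\rm LG}(n,2n)$ and $\alpha$ is continuous.
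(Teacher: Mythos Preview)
Your proof is correct and follows essentially the same route as the paper: invoke Theorem~\ref{thm}, pass to Zariski closures using that $\cS_n$ and $\cA_{n+1}$ are dense open charts of the projective varieties ${\rm LG}(n,2n)$ and $\bS_{n+1}^+$, and then quote Section~\ref{impfaff} for the quadric description. The paper's own argument is terser, simply asserting that density of $\cS_n$ in ${\rm LG}(n,2n)$ and of $\cA_{n+1}$ in $\bS_{n+1}^+$ together with Theorem~\ref{thm} yield $Z_n=\underline{\pi}({\rm LG}(n,2n))=\underline{\sigma}(\bS_{n+1}^+)$; your injectivity-plus-dimension count (or, alternatively, the Frobenius factorization from Section~\ref{beta}) makes explicit the step the paper leaves implicit, namely that $\alpha(\cS_n)$ is dense in $\bS_{n+1}^+$.
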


\begin{proof} Since the symmetric matrices $\cS_n$ are Zariski dense in ${\rm LG}(n,2n)$ and the alternating matrices are Zariski dense in $\bS_{n+1}^+$, we find, using Theorem~\ref{thm}, that
\begin{gather*}
Z_n = \underline{\pi}({\rm LG}(n,2n)) = \underline{\sigma}\big(\bS_{n+1}^+\big) \subset\PP^{2^n-1}.
\end{gather*}
In Section \ref{impfaff} we recalled that $\underline{\sigma}\big(\bS_{n+1}^+\big)$ is defined by quadrics, hence also $Z_n$ is defined by quadrics.
\end{proof}

\section[The map $\beta$]{The map $\boldsymbol{\beta}$}\label{beta}

\subsection{From antisymmetric to symmetric matrices}\label{mapas}
\looseness=1  We work over a field of characteristic two. In Section~\ref{mapsa} we defined $\alpha\colon \cS_n\rightarrow\cA_{n+1}$ in such a~way that the principal minors of $S_n$ were the Pfaffians of $\alpha(S_n)$, this condition determined the map~$\alpha$. Now we consider a map $\beta\colon \cA_{n+1}\rightarrow\cS_{n}$, which is defined in terms of a~well-known map from $\bS_{n+1}^+\cong\bS_{n+1}\rightarrow {\rm LG}(n,2n)$, which we will also denote by~$\beta$. The maps $\alpha$ and $\beta$ are not mutual inverses, instead their compositions are purely inseparable maps, given by squaring all coefficients in the matrix. Since the field has characteristic two, these maps are injective and if the field is algebraically closed (or more generally, if it is perfect) then these maps are bijections.

Let $A_{n+1}=(y_{ij})\in\cA_{n+1}$ be an alternating $(n+1)\times(n+1)$ matrix (so $y_{ii}=0$ and $y_{ij}=y_{ji}$) and define
\begin{gather*}
\beta\colon \ \cA_{n+1} \longrightarrow \cS_n,\qquad A_{n+1}\longmapsto \overline{A}_{n+1}:=\beta(A_{n+1}),\qquad
\big(\overline{A}_{n+1}\big)_{ij} := y_{ij} + y_{i,n+1}y_{j,n+1}.
\end{gather*}
For example,
\begin{gather*}
A_{4} = \begin{pmatrix}
 0&y_{12}&y_{13}&y_{14}\\
 y_{12}&0&y_{23}&y_{24}\\
 y_{13}&y_{23}&0&y_{34}\\
 y_{14}&y_{24}&y_{34}&0
 \end{pmatrix}
 \longmapsto
\overline{A}_4 =
\begin{pmatrix}
y_{14}^2&y_{12}+y_{14}y_{24}&y_{13}+y_{14}y_{34}\\
y_{12}+y_{14}y_{24}&y_{24}^2&y_{23}+y_{24}y_{34}\\
y_{13}+y_{14}y_{34}&y_{23}+y_{24}y_{34}&y_{34}^2
\end{pmatrix}.
\end{gather*}

It is not hard to verify that
\begin{gather*}
\beta(\alpha(S_n))_{ij} = (S_n)_{ij}^2,\qquad
\alpha(\beta(A_{n+1}))_{kl} = (A_{n+1})_{kl}^2,
\end{gather*}
for all $i,j=1, \ldots,n$ and all $k,l=1,\ldots,n+1$. Thus the maps $\beta\alpha\colon\allowbreak \cS_n\rightarrow \cS_n$ and $\alpha\beta\colon \cA_{n+1}\rightarrow \cA_{n+1}$ are the (coordinate wise) Frobenius maps on the respective vector spaces of matrices
\begin{gather*}
\beta\alpha = F_{\cS_n},\qquad \alpha\beta = F_{\cA_{n+1}}.
\end{gather*}

\subsection{From even to odd spinor varieties}\label{evenodd}
We denote the field of characteristic two by $K$. In Section \ref{pfmap} we considered an embedding $\cA_{n+1}\hookrightarrow\bS_{n+1}^+$, where $\bS_{n+1}^+$ parametrizes certain maximally isotropic subspaces for the quadratic form $q(y)=\sum\limits_{i=1}^{n+1}y_iy_{n+1+i}$ on $K^{2n+2}$. We define a hyperplane
\begin{gather*}
H\colon \ y_{n+1} + y_{2n+2} = 0 \qquad \big({\subset}\, K^{2n+2}\big).
\end{gather*}
The intersection $H\cap (q=0)$ can be identified with
the quadric in $K^{2n+1}$ defined by $q'$,
\begin{gather*}
q'=q_{|H}\colon \ K^{2n+1} \longrightarrow K,\qquad q'(z) = z_1z_{n+2} +\dots+ z_nz_{2n+1} + z_{n+1}^2,
\end{gather*}
simply by mapping $z=(z_1,\ldots,z_{2n+1})\mapsto y=(z_1,\ldots,z_{2n+1},z_{n+1})\in H$. A linear subspace contained in $q'=0$ has dimension at most $n$ and there is a unique family of such subspaces. If $W\subset (q=0)$ is a~maximal isotropic subspace for $q$, so $\dim W=n+1$, then $W':=W\cap H$ is a subspace of $q'=0$ of dimension $\geq n+1-1=n$ and we conclude that~$W'$ must have dimension~$n$, so $W'$ is maximally isotropic in $q'=0$. This sets up an isomorphism
\begin{gather*}
\bS_{n+1}^+ \stackrel{\cong}{\longrightarrow} \bS_{n+1}
\end{gather*}
between the spinor variety of ${\rm Spin}(2n+2)$ containing $W_0$ as in Section~\ref{isotr} and the spinor variety~$\bS_{n+1}$ of ${\rm Spin}(2n+1)$ that parametrizes the maximally isotropic subspaces for~$q'$.

\subsection[The geometry of $\beta$]{The geometry of $\boldsymbol{\beta}$}\label{geob}
We explain the geometry behind the map $\beta$, using the notation from Section~\ref{evenodd}. Since the field $K$ is assumed to have characteristic two, the alternating bilinear form $e'$ defined by $q'$ is degenerate
\begin{gather*}\begin{split}
& e'(z,w) := q(z+w) + q(z) + q(w),\\ & e'(z,w) = z_1w_{n+2}+\dots+z_nw_{2n+1} + z_{n+2}w_1+\dots+z_{2n+1}w_n,\end{split}
\end{gather*}
since the variables $z_{n+1}$, $w_{n+1}$ do not appear in~$e'$. More intrinsically, define the subspace
\begin{gather*}
\ker(e') := \big\{z\in K^{2n+1}\colon e'(z,w)=0 \ \forall\, w\in K^{2n+1} \big\},
\end{gather*}
then we see that $\ker(e')$ is one-dimensional (and is spanned by the $n+1$-st standard basis vector of~$K^{2n+1}$). We consider the quotient space $K^{2n+1}/\ker(e')\cong K^{2n}$ where we map $z=(z_1\ldots,z_{2n+1})\mapsto x=(z_1,\ldots, \widehat{z_{n+1}},\ldots,z_{2n+1})$, i.e., we omit the $(n+1)$-st coefficient. Since $e'$ is bilinear, it defines a non-degenerate alternating form~$\bar{e}$ on this quotient space simply by defining $\bar{e}(x,y):=e'(z,w)$, where $z,w\in K^{2n+1}$ map to $x,y\in K^{2n}$ respectively.

Let $W'$ be a maximally isotropic subspace in $q'=0$. Then for $z,w\in W'\subset (q'=0)$ we have $e'(z,w)=0$, and thus $\overline{W'}$, the image of $W'$ in $K^{2n}$, is an isotropic subspace for the symplectic form $\bar{e}$ on $K^{2n}$. Since $\ker(e')\cap (q'=0)=\{0\}$, the projection $\overline{W'}$ also has dimension~$n$ and hence~$\overline{W'}$ is a Lagrangian subspace for the symplectic form $\bar{e}$ on $K^{2n}$. Thus we have a map
\begin{gather*}
\beta\colon \ \bS_{n+1} \longrightarrow {\rm LG}(n,2n),\qquad W' \longmapsto \overline{W'}.
\end{gather*}
It is well-known that the orthogonal group ${\rm O}(q)$ of $q$ acts as the identity on $\ker(e')$ and that the projection to $K^{2n}$ induces a homomorphism (an isogeny) of algebraic groups ${\rm O}(q)\rightarrow {\rm Sp}(2n)$, the symplectic group defined by $\bar{e}$ (\cite[Section~4.11]{Steinberg}, \cite{Milne}, \cite[Section~7.1, Remark~7.1.6]{CGP}).

\begin{Proposition} \label{betas} The map $\beta\colon \bS_{n+1}\rightarrow {\rm LG}(n,2n)$ we just defined induces the map $\beta\colon \cA_{n+1}\rightarrow\cS_n$ from Section~{\rm \ref{mapas}}.
\end{Proposition}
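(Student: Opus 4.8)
The plan is to make the identification between the ``matrix-level'' map $\beta\colon\cA_{n+1}\to\cS_n$ of Section~\ref{mapas} and the ``geometric'' map $\beta\colon\bS_{n+1}\to{\rm LG}(n,2n)$ of Section~\ref{geob} completely explicit by chasing an alternating matrix through the embedding $\cA_{n+1}\hookrightarrow\bS_{n+1}^+$, the isomorphism $\bS_{n+1}^+\cong\bS_{n+1}$, and the projection $K^{2n+1}/\ker(e')\cong K^{2n}$. Concretely, let $A=A_{n+1}=(y_{ij})\in\cA_{n+1}$. By Section~\ref{pfmap} the corresponding maximally isotropic subspace $W_A\subset(q=0)$ in $K^{2n+2}$ is the column span of $M_{W_A}=\binom{{\bf I}}{A}$; its $n+1$ basis vectors are $w_i=f_i+\sum_{j}y_{ji}f_{n+1+j}$ for $i=1,\dots,n+1$. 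First I would intersect $W_A$ with the hyperplane $H\colon y_{n+1}+y_{2n+2}=0$; following Section~\ref{evenodd} this intersection $W'=W_A\cap H$ is an $n$-dimensional isotropic subspace for $q'$, and I would write down an explicit basis for it by taking the $n$ linear combinations of the $w_i$ whose $y_{n+1}$- and $y_{2n+2}$-coordinates agree.

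The next step is to project $W'$ to $K^{2n}$ by deleting the $(n+1)$-st coordinate (the span of $\ker(e')$), obtaining the Lagrangian $\overline{W'}=\beta(W_A)\subset K^{2n}$. I expect that after choosing the ``natural'' basis---i.e.\ combining $w_i+ (\text{coefficient})\, w_{n+1}$ to kill the offending coordinate, for $i=1,\dots,n$---the resulting $2n\times n$ matrix $M_{\overline{W'}}$ has top block ${\bf I}$ (possibly after an obvious column normalization) and bottom block exactly the symmetric matrix $\overline{A}_{n+1}$ with entries $(\overline{A}_{n+1})_{ij}=y_{ij}+y_{i,n+1}y_{j,n+1}$. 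That is, under the parametrization $\cS_n\hookrightarrow{\rm LG}(n,2n)$ of Section~\ref{Leg}, the Lagrangian $\beta(W_A)$ is precisely $W_{\overline{A}_{n+1}}$, which is the assertion of the proposition. The key arithmetic point making the cross-term $y_{i,n+1}y_{j,n+1}$ appear is that the $(n+1)$-st row of $A$ contributes the coordinates $y_{n+1,i}=y_{i,n+1}$, and eliminating the $\ker(e')$-coordinate forces a product of two such entries---together with the fact that in characteristic two the sign bookkeeping in $e$ versus $q$ disappears, so that ``$W_A$ isotropic for $q$'' and ``$\overline{A}$ symmetric'' are compatible.

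The main obstacle I anticipate is purely a matter of bookkeeping with the two symplectic/quadratic forms and the index shift: $W_A$ lives in $K^{2n+2}$ with form $q(y)=\sum_{i=1}^{n+1}y_iy_{n+1+i}$, whereas after the constructions we land in $K^{2n}$ with the form $\bar e$ inherited from $q'$ on $K^{2n+1}$, and one has to check that the symplectic basis of $K^{2n}$ implicit in $\cS_n\hookrightarrow{\rm LG}(n,2n)$ matches the one induced from $\bar e$---otherwise the bottom block, though conjugate to $\overline{A}_{n+1}$, would not literally equal it. I would therefore be careful to fix, once and for all, the identification $K^{2n+1}/\ker(e')\cong K^{2n}$ by omitting the coordinate indexed $n+1$ and relabelling, and then verify that $\bar e$ becomes the standard symplectic form of Section~\ref{Leg}; granting this, the computation of $M_{\overline{W'}}$ is a short direct calculation and the rest follows. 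A sanity check is the case $n=3$ (so $A=A_4$), where the displayed $\overline A_4$ in Section~\ref{mapas} can be matched against the explicit basis of $\overline{W'}\subset K^6$, confirming both the normalization and the sign conventions.
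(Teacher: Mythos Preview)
Your proposal is correct and follows exactly the paper's argument: take the columns $c_i$ of $\binom{{\bf I}}{Y}$, form the combinations $c_i+y_{i,n+1}c_{n+1}$ ($i=1,\ldots,n$) to land in $H$, then project to $K^{2n}$ and read off the bottom block as $\overline{Y}$. The only point to tidy up is that the passage $H\subset K^{2n+2}\to K^{2n}$ deletes \emph{two} coordinates (the $(n+1)$-st and the $(2n+2)$-nd, which coincide on $H$ in characteristic two), not just one; once that is said, your ``bookkeeping'' concern evaporates and no further normalization is needed.
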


\begin{proof} \looseness=-1 Given $Y\in \cA_{n+1}$, let $W_Y\subset K^{2n+2}$ be the subspace spanned by the columns of the $(2n+2)\times(n+1)$ matrix $M$ with blocks ${\bf I}$ and~$Y$. The intersection $W'_Y:=H\cap W_Y$ is spanned by the $n$ vectors $c_i+y_{i,n+1}c_{n+1}$, $i=1,\ldots,n$, where $c_i$ is the $i$-th column of $M$, in fact the $n+1$ and $2n+2$ coefficients of $c_i+y_{i,n+1}c_{n+1}$ are $0+y_{i,n+1}\cdot1$ and $y_{i,n+1}+y_{i,n+1}\cdot 0$ respectively, and their sum is indeed zero, showing that these vectors do lie in $H\cap W_Y$. Next we project these vectors to $K^{2n}$, so we omit the $(n+1)$-st coefficients, their span is then $\overline{H\cap W_Y}$. The image vectors are the columns of the $2n\times n$ matrix with blocks ${\bf I}$ and $\overline{Y}$, which proves that $\beta$ induces $Y\mapsto\overline{Y}$.\end{proof}

\begin{Remark}\label{remarkisog} The underlying reason for the results we obtained thus seems to be the isogeny of the linear algebraic groups ${\rm SO}(2n+1)\rightarrow {\rm Sp}(2n)$ (of type $B_n$ and $C_n$ respectively) over a field of characteristic two, cf.\ \cite[Section~4.11]{Steinberg}, \cite{Milne}, \cite[Section~7.1, Remark 7.1.6]{CGP}. The description of the isogeny leads directly to the map $\beta\colon \bS_{n+1}\rightarrow {\rm LG}(n,2n)$.

Using this isogeny, in \cite[p.~197]{Kleidman} one finds the definition of the ($2^n$-dimensional) Spin representation of ${\rm Sp}(2n)(K)$, where $K$ is a field of characteristic two. In Section~\ref{pluc} we recalled the decomposition of $\wedge^kV$ into ${\rm Sp}(2n)$-representations, where $V$ is the standard $2n$-dimensional representation of ${\rm Sp}(2n)$ in case the field $K$ has characteristic zero.

Now we assume that the field $K$ has characteristic two. The symplectic form on $V$ still induces an ${\rm Sp}(2n)$-equivariant contraction map $\partial_n\colon \wedge^nV\rightarrow \wedge^{n-2}V$ and $\big({\wedge}^nV\big)_0:=\ker(\partial_n)$ is an ${\rm Sp}(2n)$-subrepresentation of $\wedge^n V$ (but if $n>3$, then the dimension of $\big({\wedge}^nV\big)_0$ in characteristic two is larger than its dimension in characteristic zero, cf.\ \cite[Theorem~2.2]{Gow}). Gow~\cite{Gow} showed that now the image of the contraction map $\partial_{n+2}\colon \wedge^{n+2}V\rightarrow \wedge^nV$ has codimension $2^n$ in $\ker(\partial_n)$ and that the quotient ${\rm Sp}(2n)$-representation $\ker(\partial_n)/\im(\partial_{n+2})$ is the Spin representation of ${\rm Sp}(2n)$. The explicit description of the quotient module given in \cite[Section~3]{Gow} shows also that the composition ${\rm LG}(n,2n)\rightarrow\PP\big(\big({\wedge}^n V\big)_0\big)\rightarrow\PP(\ker(\partial_n)/\im(\partial_{n+2}))$ factors over the principal minor map $\underline{\pi}\colon {\rm LG}(n,2n)\rightarrow\PP^{2^n-1}$.

Since the composition $\underline{\pi}$ of the Pl\"ucker map with Gow's quotient map
\begin{gather*}
{\rm LG}(n,2n) \longrightarrow \PP\big({\wedge}^nV\big)_0 \longrightarrow \PP(\ker(\partial_n)/\im(\partial_{n+2})) = \PP^{2^n-1}
\end{gather*}
is equivariant for the action of ${\rm Sp}(2n)$ and ${\rm Sp}(2n)$ acts transitively on ${\rm LG}(n,2n)$, this composition is everywhere defined (so there are no base points). The image of ${\rm LG}(n,2n)$ is then the unique closed orbit of ${\rm Spin}(2n+1)$ in the projectivization of its spin representation, which is the spinor variety $\bS_{n+1}$. Thus we get a map ${\rm LG}(n,2n)\rightarrow \bS_{n+1}$ `for free'.

So, just from the isogeny and Gow's results, we recover a main result which we previously deduced from explicit computations. However, if one would like to know what the restriction of this map to the subset of ${\rm LG}(n,2n)$ parametrized by symmetric matrices is and what the quadratic relations between principal minors in characteristic two are, then the results of the first part of this paper are still useful.
\end{Remark}

\section{Freudenthal triple systems}

\subsection{Outline}
The case $n=3$, where we considered Cayley's hyperdeterminant and the Lagrangian Grassmannian ${\rm LG}(3,6)$, and the case $n=6$, where we considered the spinor variety $\bS_6\subset\PP^{31}$ associated to ${\rm Spin}(12)$, appear in the context of groups of type $E_7$~\cite{Brown} and reduced Freudenthal triple systems related to cubic Jordan algebras, as well as in the Freudenthal magic square~\cite{LM}. These subjects are also of relevance in Maxwell--Einstein four-dimensional supergravity (see, e.g., \cite{BFGM,BDFMR,CFMZ,FGK,FG,FM}), as well as in the so-called black-hole/qubit correspondence (cf.~\cite{BDDER,BDL,D,KL}).

It should be noticed that one usually excludes fields of characteristic $2$ and $3$ in these subjects. However, e.g., in~\cite{BDDR} integral Freudenthal triple systems and integral cubic Jordan algebras have been studied, and these can be reduced modulo two. Below we will also consider an approach to these subjects through the algebraic geometry of tangential varieties of certain homogeneous spaces.

\subsection[Groups of type $E_7$]{Groups of type $\boldsymbol{E_7}$}\label{jor}

A group of type $E_7$ is the subgroup, which we denote by $G_4$, of ${\rm GL}(R)$, where $R$ is a finite-dimensional vector space over a field $k$, which preserves a non-degenerate alternating form $e$ and a homogeneous quartic polynomial~$q$ (cf.~\cite{Brown,Kru}). It turns out that given some additional conditions, including a compatibility between~$e$ and~$q$ and $\operatorname{char}(k)\ne 2,3$, as well as the condition that $- \frac{1}{2} q(f,f,f,f)$ is a~non-zero square for some $f\in R$ (such triples $(R,q,e)$ are called reduced, \cite[p.~90]{Brown}), the vector space $R$ decomposes as
\begin{gather*}
R = k\oplus J \oplus J \oplus k,
\end{gather*}
where $J$ is a (cubic) Jordan algebra \cite{ja}, in \cite[Section~3.1]{Kru} $R$ is denoted by~$M(J)$. The Jordan algebras of interest for us are those given by the $3\times 3$ matrices of the form
\begin{gather*}
A := \begin{pmatrix} a&z&\bar{y}\\\bar{z}&b&x\\y&\bar{x}&c\end{pmatrix},\qquad a,b,c\in k,\qquad x,y,z\in C,
\end{gather*}
where \looseness=1  $C$ is a composition algebra with involution $x\mapsto \bar{x}$. and norm $n\colon C\rightarrow k$. The best known examples are $k=\RR$ and $C=\RR,\CC,\HH,\OO$, where $\HH$ are the quaternions and $\OO$ are the octonions. For simplicity we will also consider the corresponding split algebras as in \cite[Section~2.1]{Kru} here.

\looseness=1 Notice that the dimension of $J$ as a $k$-vector space is $3q+3$ where $q:=\dim C$ and $\dim R=2+2(3q+3)=6q+8$. Over the complex numbers (the split case), the group~$G_4$ will be ${\rm SL}(2,\CC)^3$, ${\rm Sp}(6,\CC)$, ${\rm SL}(6,\CC)$, ${\rm Spin}(12,\CC)$ and $E_7(\CC)$ for $q=0,1,2,4,8$, respectively, cf.\ \cite[Table~3]{Holweck}. The $G_4$-orbit $Y_q$ of the highest weight vector in $\PP R=\PP^{6q+7}$ is a complex projective algebraic variety of dimension~$3q+3$, it is the unique closed orbit of~$G_4$ in~$\PP R$. The tangential variety~$X_q$ of~$Y_q$ (see also Section~\ref{tgv}) has dimension $6q+6$ and is defined by the quartic polynomial~$q$ (cf.~\cite{FGK,Holweck,LM}).

\subsection{The alternating form and the quartic}\label{jor2}
The algebra $J$ comes with a norm $N\colon J\rightarrow k$, which is homogeneous of degree three, and which generalizes the determinant of a $3\times 3$ matrix:
\begin{gather*}
N(A) = abc - ax\bar{x} - by\bar{y} - cz\bar{z} + (xy)z + \bar{z}(\bar{y}\bar{x}).
\end{gather*}
One defines the regular trace $\operatorname{Tr}\colon J\rightarrow k$ and a symmetric bilinear form $(-,-)\colon J\times J\rightarrow k$ by
\begin{gather*}
\operatorname{Tr}(A) = a+b+c,\qquad (A,B) := \operatorname{Tr}\left( \tfrac{1}{2} (AB+BA)\right).
\end{gather*}
Finally there is `sharp' operation on $J$, similar to the adjoint of a matrix,
\begin{gather*}
A^\sharp := \begin{pmatrix}
bc-n(x)&\bar{y}\bar{x}-cz&zx-b\bar{y}\\xy-c\bar{z}&ac-n(y)&\bar{z}\bar{y}-ax\\
\bar{x}\bar{z}-by&yz-a\bar{x}&ab-n(z)\end{pmatrix}.
\end{gather*}
We will write elements of $R$ as four tuples $(a,A,B,b)$ with $a,b\in k$ and $A,B\in J$ but often a~matrix notation is used (cf.\ \cite[equation~(29)]{Kru}). With these definitions (cf.\ \cite[Section~2.4]{Kru}), the alternating form $e\colon R\times R\rightarrow k$ is given by (cf.\ \cite[Section~3.1]{Kru})
\begin{gather*}
e((a,A,B,b),(c,C,D,d)) = ad - bc+(A,D) - (B,C),
\end{gather*}
and the quartic form $q\colon R\rightarrow k$ is
\begin{gather*}
G := -2G',\qquad G'((a,A,B,b)) = -4\big(A^\sharp,B^\sharp\big) + 4aN(A) + 4bN(B) + \big((A,B)-ab\big)^2.
\end{gather*}
A Freudenthal triple system is a vector space $R$ with a non-degenerate alternating form and a~composition $T\colon R\times R\times R\rightarrow R$ such that certain conditions are satisfied cf.\ \cite[Section~3.1]{Kru}. Under additional conditions one recovers a group of type $E_7$ as the automorphism group of a~Freudenthal triple system.

\begin{Example}[$q=0$] \label{q=0}
Recall that $q:=\dim C$ and if $C=0$ then $J$ is the three-dimensional algebra of diagonal matrices and $\dim R=8$. We change the coordinates as follows
\begin{gather*}
x:= (a,\operatorname{diag}(a_{11},a_{22},a_{33}),\operatorname{diag}(b_{11},b_{22},b_{33}),b) =(z_{000},z_{110},z_{101},z_{011},z_{001},z_{010},z_{001},z_{111}),
\end{gather*}
and similarly $((c,C,D,d)=(y_{000},\ldots,y_{111})$. The symplectic form is then
\begin{gather*}
e(x,y) =z_{000} y_{111} - z_{001} y_{110} - z_{010} y_{101} + z_{011} y_{100} - z_{100} y_{011} + z_{101} y_{010} \\
\hphantom{e(x,y) =}{}+ z_{110} y_{001} - z_{111} y_{000}
\end{gather*}
and the quartic polynomial $G'$ is the hyperdeterminant: $G'(x)=H$ (cf.\ Example~\ref{n3}). Since the hyperdeterminant becomes a square modulo two, as we observed in Example~\ref{n3}, we consider some other cases of the construction above.
\end{Example}

\subsection{Reduction modulo two}
More generally, if we assume that $N(A),N(B)$ and $(A,B)$ are polynomials in the coefficients of~$A$,~$B$ with integer coefficients, then the reduction of the quartic form $G'$ is
\begin{gather*}
G'\equiv Q^2\quad\text{mod}\, 2,\qquad Q((a,A,B,b)) := (A,B)-ab,
\end{gather*}
so it becomes the square of a quadratic polynomial. This may be particularly relevant when considering integral Freudenthal triple systems in characteristic~2, since in this case the so-called Freudenthal duality is always defined, albeit becoming simply an antinvolutive electric-magnetic symplectic duality transformation~\cite{BDDR,FKM}.

Since $(A,B)$ is symmetric and bilinear in $A$, $B$ the bilinear form associated to $(A,B)-ab$, defined as $Q(x+y)-Q(x)-Q(y)$, is
\begin{gather*}
\big((A+C,B+D)-(a+c)(b+d)\big) - \big((A,B)-ab\big) - \big((C,D)-cd)\big)\\
\qquad{} = (A,D) + (B,C) - ad - bc,
\end{gather*}
hence (notice that `$+$'=`$-$') the associated bilinear form is now the alternating form $e$. In general, the associated bilinear form $e$ of a quadratic form over a field of
characteristic two is alternating (that is $e(x,x)=0$) which follows easily by putting $x=y$ in $Q(x+y)-Q(x)-Q(y)=e(x,y)$.

In particular, the group fixing both the alternating form and the quadric is just the orthogonal group fixing the quadric, see also \cite[Remark~20]{Kru}. The case $q=1$ presents some extra features since in that case the quadric and the alternating form are degenerate, see Example~\ref{q1}.

\begin{Example}[$q=1$]\label{q1} We consider the constructions from Section~\ref{jor2} for the case that $k=C=\RR$, with $\bar{x}=x$ and $n(x)=x$ for $x\in C$. Then $q=\dim_kC=1$ and $\dim J=6$, $\dim R=14$ and $R=\big({\wedge}^3k^6\big)_0$, which is an irreducible ${\rm Sp}(6,\RR)$-representation, cf.\ Section~\ref{pluc}. In that case $J$ is just the six-dimensional $\RR$-algebra of $3\times 3$ symmetric matrices, $N(A)=\det(A)$ and $A^\sharp$ is the adjoint matrix of $A$ and we write $A=(a_{ij})$, $\ldots$, $D=(d_{ij})$ where $A,\ldots,D$ are $3\times 3$ symmetric matrices. Then
\begin{gather*}
e(((a,A,B,b),(c,C,D,d))=a_{11}d_{11 }- b_{11}c_{11} + a_{22}d_{22 }- b_{22}c_{22} +a_{33}d_{33 }- b_{33}c_{33} \\
\qquad{} +2\bigl(a_{12}d_{12 }- b_{12}c_{12} + a_{13}d_{13 }- b_{13}c_{13} + a_{23}d_{23} - b_{23}c_{23 }) + ad - bc,
\end{gather*}
notice the factors $2$ which appear. The quartic $G'(x)=a_{11}^2b_{11}^2 + 4a_{11}a_{22}a_{33}a+\dots+a^2b^2$ has $44$ terms. Reducing modulo two one finds
\begin{gather*}
G' \equiv Q^2,\qquad Q := a_{11}b_{11} + a_{22}b_{22} + a_{33}b_{33} + ab,
\end{gather*}
and $Q=0$ is a singular quadric in $\PP^{13}$ (notice that the $6=3+3$ variables $a_{ij}$, $b_{ij}$ with $i<j$ do not appear). The associated bilinear form of $Q$ is the reduction of $e$ mod 2, and this is a~degenerate alternating form. Notice that while~\cite{Kru} discusses the next three cases, $q=2,4,8$, the case $q=1$ is avoided.

As we observed in Remark~\ref{remarkisog}, Gow showed that the ${\rm Sp}(6)$-representation $R$ has a subrepresentation $\im(\partial_{5})$, the image of the contraction map $\partial_5\colon \wedge^5 k^6 \rightarrow\wedge^3k^6$, with $2^3=8$-dimensional quotient $R/\im(\partial_5)$, the Spin representation of ${\rm Sp}(6)$. In this case, the subrepresentation coincides with the kernel of the bilinear form defined by~$Q$, written again as~$Q$, that is
\begin{gather*}
\im(\partial_{5}) = \{ v\in R\colon Q(v,w)=0\ \forall\, w\in R \}.
\end{gather*}
Thus $Q$ restricts to a nondegenerate quadratic form on the Spin representation of ${\rm Sp}(6)$. According to \cite[Section~4]{Gow}, such an ${\rm Sp}(2n)$-invariant non-degenerate quadratic form on the Spin representation exists for any~$n$.
\end{Example}

\subsection[The cases $q=2,4,8$]{The cases $\boldsymbol{q=2,4,8}$}\label{q8}
The cases $q=2,4,8$ do not seem to present special features. In case $q=8$ (see also~\cite{wilson}), we use the expression for the quartic invariant that we found in~\cite{freudenthal}. Let $X:=(x_{ij})$, $Y:=(y_{ij})$ be alternating $8\times 8$ matrices over the real or complex numbers. We view the $x_{ij},y_{ij}$ as coordinates on a $28+28=56$-dimensional vector space and we define a symplectic form on this vector space by requiring that the $x_{ij}$, $y_{ij}$ are the coordinates on a symplectic basis. The quartic invariant of~$E_7$ is defined as
\begin{gather*}
J := \operatorname{Pf}(X) + \operatorname{Pf}(Y) - \tfrac{1}{4} \operatorname{Tr}(XYXY) + \tfrac{1}{16} (\operatorname{Tr}(XY))^2.
\end{gather*}
The degree four polynomial $J$ has $1036$ terms and coefficients in $\big\{{\pm}1, \pm \frac{1}{2}, - \frac{1}{4} \big\}$. To find a~reduction mod $2$ we simply multiply $J$ by $4$ and then reduce mod $2$ to obtain a quartic $\overline{J}$ which has only~$28$ terms
\begin{gather*}
\overline{J} = x_{12}^2y_{12}^2 + x_{13}^2y_{13}^2 + \dots + x_{78}^2y_{78}^2 = \sum_{1\leq i<j\leq 28} x_{ij}^2y_{ij}^2,
\end{gather*}
which is the square of the quadratic polynomial $Q=\sum\limits_{i<j} x_{ij}y_{ij}$. Notice that the alternating form defined by $Q$ is indeed the symplectic form defined above.

\subsection{Tangential varieties}\label{tgv}
The representation of the group $G_4$ on $\PP R=\PP^{6q+7}$ has a unique closed orbit $Y_q$ of dimension $3q+3$ and one recovers the zero locus of the quartic invariant
as the $(6q+6)$-dimensional tangential variety $X_q$ of $Y_q$. This allows one to determine the equation of~$X_q$ over a field of characteristic two. In the case $q=1$ we again find the singular quadric from Example~\ref{q1} and in the cases $q=0,2,4$ where we did the computations, we again find a smooth quadric as before. We didn't attempt to compute the $q=8$ case.

Given a subvariety $Z$ of $\PP^N$, its tangential variety, often denoted by~$\tau(Z)$, is the union of all of its projective tangent spaces, equivalently, it is the union of all embedded tangent lines to~$Z$, see~\cite{Z}. If locally we have a parametrization
\begin{gather*}
\phi\colon \ k^n \longrightarrow Z,\qquad \phi(x) = (f_0(x):\ldots:f_N(x))
\end{gather*}
for \looseness=-1 certain functions on $k^n$, then the tangent spaces to the image of $\phi$ are locally parametrized by
\begin{gather*}
\tilde{\phi}\colon \ k^n\times k^n \longrightarrow \PP^N,\qquad (x,y) \longmapsto \left(\ldots: f_k(x)+\sum_{j=1}^n \frac{\partial f_k}{\partial x_j} (x)y_j:\ldots\right).
\end{gather*}
In particular, given an explicit local parametrization of $Z$ with dense image, one can compute the homogeneous polynomials vanishing on the tangential variety of~$Z$. For references and some recent results, mostly over the complex numbers, see~\cite{tangseg}.

As we will see in the next section, over a field of characteristic two the tangential varieties behave rather differently from other characteristics (in all our examples we find quadrics instead of quartics).

\looseness=1 One of the referees for this paper offered the following explanation (actually we simplify quite a bit) for this `bad' behaviour. In the examples we consider the secant variety of $Y_q$, which is the closure of the union of all lines in $\PP R$ joining two distinct points in~$Y_q$, is all of~$\PP R$. Moreover, a general point of $\PP R$ lies on a unique secant of~$Y_q$. Over an open subset $U\subset\PP R$ we thus have a~fibration $L\rightarrow U$ where the fiber~$L_r$ over $r\in U$ is the unique secant containing $r$.
On each secant there are the two points where $L_r$ intersects $Y_q$. These points give a subvariety $V\subset L$ and the induced map $\pi\colon V\rightarrow U$ has degree two. The map $\pi$ ramifies exactly when $L_r$ is tangent to $Y_q$, that is when $r\in X_q$, the tangential variety of~$Y_q$. Notice that locally $V$ is defined by an equation $a(r)s^2+b(r)st+c(r)t^2$ with $r\in U$ and $(s:t)$ homogeneous coordinates on the line~$L_r$. The ramification locus of $\pi$ is then defined by $b^2(r)-4a(r)c(r)=0$
which reduces to $b(r)^2=0$ when the field has characteristic two. The examples in the next section in fact show that the quartics defining the tangential variety $X_q$ are the squares of quadrics in that case. A worked out example for a related, but simpler, case $q=-1$ is given in \cite[Example~4.7]{Russo}.

\subsection{The various cases} \label{varcas}
We verified in the cases $q=0,1,2,4$ that the zero locus $X_q$ of the quartic invariant, which is the tangential variety of the closed orbit $Y_q$, is actually a quadric in characteristic two, by explicitly finding an equation for the tangential variety to $Y_q$ in characteristic two.

In case $q=0$, we verified that the quadric we found in Example~\ref{q=0} is the tangential variety of the Segre 3-fold $Y_0$, the image of $\PP^1\times\PP^1\times\PP^1$ in~$\PP^7$.

For $q=1$, one finds by direct computation that the image $Y_1$ of ${\rm LG}(3,6)$ under the Pl\"ucker map spans a $\PP^{13}\subset\PP^{19}$. The tangential variety $X_1$ of ${\rm LG}(3,6)$ in this $\PP^{13}$ is a singular quadric of rank $8$. The singular locus of the quadric is a $\PP^5\subset\PP^{13}$. Notice that this $\PP^5$ is mapped into itself under the action of ${\rm Sp}(6)$ on ${\rm LG}(3,6)$ and $\PP^{13}$, in fact $\PP^5$ is $\PP(\im\partial_5)$, the subrepresentation of $\PP R$ as in Example~\ref{q1}.

For $q=2$, we found that the tangential variety of the Pl\"ucker embedded $Y_2={\rm Gr}(3,6)\subset\PP^{19}$ is a smooth quadric.

Also for $q=4$, we checked that the tangential variety of the 15-dimensional $Y_4=\bS_6$, Pfaffian embedded in $\PP^{31}$, is a smooth quadric (cf.~\cite{M} for such aspects of the geometry of spinor varieties).

In case $q=8$, the variety $Y_8$ of dimension $27$ in $\PP^{55}$ is known as the Freudenthal variety~\cite{MM}. Its tangential variety $X_8$, over the complex numbers, is the quartic hypersurface defined by $J=0$ in Section~\ref{q8}. We haven't computed what happens over a field of characteristic two since the only parametrization of $Y_8$ that we know of is rather cumbersome.

\subsection*{Acknowledgments}
BvG would like to thank L.~Oeding and W.~van der Kallen for helpful correspondence and discussions. We are indebted to the referees of this paper for comments and suggestions
for improvements.

\pdfbookmark[1]{References}{ref}
\LastPageEnding

\end{document}